\documentclass[journal]{IEEEtran}
\usepackage{amsmath,amssymb,amsthm}
\usepackage{graphicx}
\usepackage{booktabs}
\usepackage[T1]{fontenc}
\usepackage[hidelinks]{hyperref}
\hypersetup{pdftitle={The Single-Copy Quantum Bandit Is Classical: An Exact Spectral Collapse},pdfauthor={Siu Hin Ng}}

\newtheorem{theorem}{Theorem}
\newtheorem{proposition}{Proposition}
\newtheorem{lemma}{Lemma}
\newtheorem{corollary}{Corollary}
\theoremstyle{remark}
\newtheorem{remark}{Remark}
\newtheorem{openproblem}{Open Problem}

\newcommand{\Tr}{\operatorname{Tr}}
\newcommand{\kl}{\operatorname{kl}}
\newcommand{\KL}{\mathrm{KL}}
\newcommand{\D}{\mathcal D}
\newcommand{\Minf}{D_{M,\mathrm{inf}}}
\newcommand{\Dinf}{D_{\mathrm{inf}}}
\newcommand{\Kinf}{K_{\mathrm{inf}}}

\begin{document}

\title{The Single-Copy Quantum Bandit Is Classical:\\ An Exact Spectral Collapse}

\author{Siu Hin Ng\thanks{Siu Hin Ng is with Academia Sinica, Taipei, Taiwan. Email: ng\_siuhin@iis.sinica.edu.tw.}}

\maketitle

\begin{abstract}
We study multi-armed bandits whose arms supply unknown quantum states and whose mean rewards are defined by a known effect $F$. Each fresh copy may be measured by an arbitrary, adaptively chosen POVM. We prove an exact collapse: the measured relative entropy from an arm state to its confusing reward half-space equals the classical Burnetas--Katehakis functional of $F$'s spectral statistics, and the spectral measurement of $F$ together with an explicit least-favorable state, built from the derivative of the matrix logarithm, forms a saddle point of the underlying measurement game. Consequently, spectral measurement followed by classical KL-UCB is asymptotically instance-optimal among all consistent single-copy policies, in every finite dimension and for every reward effect under the stated nondegeneracy assumptions: adaptive and randomized measurement design cannot improve the leading logarithmic regret coefficient, and neither can storing copies for later single-copy measurement. A self-contained finite-time bound covers general effects and boundary distributions. Any further improvement must come from measurements outside the single-copy class. For consistent policies with arbitrary quantum memory, an amortized relative-entropy argument gives a converse with the Umegaki half-space divergence, and the two per-arm regret floors, $1/\Kinf$ and $1/\Dinf$, coincide if and only if the arm commutes with $F$. Whether the Umegaki floor is attainable involves a composite quantum Stein problem and a separate reduction to adaptive regret; we pose it as an open problem, with two-copy numerical evidence.
\end{abstract}

\section{Introduction}\label{sec:intro}

When a learner interacts with quantum systems, information is purchased by measurement, and a copy measured once is gone. This makes the multi-armed bandit --- the minimal model of the exploration--exploitation dilemma --- a natural probe of a basic question: \emph{in sequential learning of unknown quantum states, how much is measurement design worth?}

We consider $K$ arms, arm $k$ carrying an unknown state $\rho_k$ on a $d$-dimensional Hilbert space. Each pull hands the learner a single fresh copy of the pulled arm's state, which it may measure with any finite-outcome POVM, chosen adaptively (and possibly at random) from the entire observation history. Rewards are given by a known effect $F$: arm $k$ has mean reward $\mu_k=\Tr\rho_k F$, and performance is the pseudo-regret $R(T)=\sum_k \Delta_k\,\mathbb E[N_k(T)]$ with gaps $\Delta_k=\mu^\star-\mu_k$.

A priori one expects a rich measurement-design problem. The learner could measure the reward observable itself; or an informationally complete POVM; or adapt measurements to a running estimate; or randomize. Our main result says that, asymptotically, none of this matters: measuring the reward observable and running a classical index is optimal among all consistent single-copy strategies, in every dimension and for every effect.

This is not because measurement freedom is powerless. At the qubit instance of Section~\ref{sec:num}, a projective measurement tilted by $\pi/4$ from the eigenbasis of $F$ is fooled completely by the confusing set: the best arm's own state reproduces arm $k$'s outcome statistics, so the game value is zero (Section~\ref{sec:num}). A priori, then, the best single-copy measurement could have depended on the unknown state and required adaptivity. What the theorem shows is that one explicit state $\sigma^\star$, which need not commute with $\rho$, caps every POVM at once --- no single-copy measurement extracts more than $\Kinf$ against it --- while the reward measurement extracts at least $\Kinf$ against every state in the confusing set: the pair is a saddle point of the measurement game. The theorem was found numerically, when a nested optimization landed on the analytic witness to four decimals, and then proved; the proof is short because the Berta--Fawzi--Tomamichel variational objective is concave and $\sigma^\star$ is built to be its critical point. Its content is the exact identification, which fixes every single-copy regret constant at once. Measurements outside the single-copy class are the only resource that could lower the constant, and we determine exactly when the converse leaves them no room: the two floors coincide if and only if the arm commutes with $F$.

\emph{Contributions.}
(i)~\textbf{Spectral collapse and saddle point (Theorem~\ref{thm:collapse}, Corollary~\ref{cor:saddle}).} For any effect $F=\sum_i f_iP_i$, the measured relative entropy from $\rho$ to the half-space $\Lambda=\{\sigma:\Tr\sigma F\ge\mu^\star\}$ collapses exactly to the \emph{classical} constrained divergence $\Kinf$ of the spectral statistics $p_i=\Tr\rho P_i$ --- for a projector $F$, simply $\kl(\mu,\mu^\star)$. An explicit minimizer $\sigma^\star$ is obtained by applying the derivative of the matrix logarithm to $\rho$, and the pair (spectral measurement, $\sigma^\star$) is a saddle point of the game $(M,\sigma)\mapsto\KL(P^M_\rho\|P^M_\sigma)$: the supremum over POVMs and the infimum over the half-space interchange, as they do for any closed convex confusing set \cite{BBH21}, and the common value is $\Kinf$, attained by an explicit pair.
(ii)~\textbf{Single-copy optimality, with a full finite-time analysis (Theorems~\ref{thm:lbsc}--\ref{thm:finite}).} A change-of-measure converse over interaction transcripts shows every consistent single-copy policy pulls arm $k$ at least $(1-o(1))\ln T/\Kinf(k)$ times, \emph{including} adaptive and randomized measurement choice; the classical empirical-KL-UCB index \cite{MMS11,CGMMS13} run on spectral-measurement outcomes attains this constant, with explicit non-asymptotic constants (Section~\ref{sec:finite}) closing the achievability side, importing only classical concentration lemmas. Storing copies for later single-copy measurement does not lower the constant either (Remark~\ref{rem:delay}).
(iii)~\textbf{Quantum-memory converse (Theorem~\ref{thm:univ}).} An amortized argument on classical--quantum relative-entropy potentials gives the floor $\sum_k\Delta_k/\Dinf(k)$ for consistent quantum-memory strategies with arbitrary memory, capping the collective advantage by $\Dinf/\Kinf$.
(iv)~\textbf{Equality characterization (Corollary~\ref{cor:iff}).} The two floors coincide, $\Dinf(k)=\Kinf(k)$, if and only if $[\rho_k,F]=0$: the collective bracket is degenerate exactly in the commuting case, and every non-commuting arm leaves room, in principle, for a collective advantage.
(v)~\textbf{Two-copy evidence (Section~\ref{sec:n2}).} The block values obey $\Kinf\le V_n\le V_{2n}\le K^{(2n)}\le\Dinf$ (Remark~\ref{rem:blocks}). At two copies, product-spectral measurement attains $\Kinf$, a seeded local max--min search over two-copy projective bases ($U(4)$) returns no larger worst-case value, and an entangled candidate that appears advantageous on a two-parameter slice is defeated by a transverse alternative, which illustrates why slice searches can mislead. This is evidence for $V_2=\Kinf$ at the running instance, not a certificate over all POVMs or over other block sizes.

Attainability of the collective floor $\Dinf$ requires control of a composite quantum Stein problem and a separate reduction to adaptive regret. Regularized formulas for composite i.i.d. testing are available \cite{BBH21,LamiSanov25}; evaluating them for reward half-spaces and relating them to adaptive regret remain open (Open Problem~\ref{op:collective}).

The single-copy results are instantiated numerically on a qubit example (Section~\ref{sec:num}); the finite-time theorem is instantiated on a three-atom effect (Fig.~\ref{fig:ftsim}).

\emph{Related work.} This ``states-as-arms'' setting differs from the quantum bandits of Lumbreras, Haapasalo and Tomamichel \cite{LHT22}, where a \emph{single} unknown state is probed by \emph{known} observables acting as arms and the focus is minimax $\sqrt T$ behavior (see also \cite{LumPure24,LumAnyDim26,LumThesis25}); it is also distinct from quantum \emph{speedups} of classical bandits via quantum reward oracles \cite{Wan23}. Unknown quantum states already serve as actions in quantum contextual bandits \cite{BLT24}; here the objective is a fixed observable, and the result identifies the exact instance-dependent regret constant while optimizing over all single-copy measurements. Measured relative entropies govern optimal error exponents in sequential quantum hypothesis testing \cite{LTT22}, where measurement adaptivity can matter, and composite sequential quantum testing leads to minimal measured-relative-entropy exponents \cite{SPJ26}; our additional conclusion is the explicit collapse of the reward half-space problem to a fixed spectral measurement. The index of Section~\ref{sec:finite} is the finite-support empirical KL-UCB of \cite{MMS11,CGMMS13}, re-analyzed here with explicit constants on the spectral alphabet, and the collective converse follows the potential-function style of amortized channel discrimination \cite{BHKW20}.

\section{Model and preliminaries}\label{sec:model}

Let $\mathcal H$ be a Hilbert space, $\dim\mathcal H=d<\infty$, $\D(\mathcal H)$ the states. Fix a known effect $F$, $0\preceq F\preceq I$ (any bounded reward observable can be brought to this normalization affinely), with spectral resolution $F=\sum_{i=1}^{L} f_iP_i$, $f_i\in[0,1]$ distinct. An \emph{instance} is $\nu=(\rho_1,\dots,\rho_K)\in\D(\mathcal H)^K$ with means $\mu_k=\Tr\rho_kF$, a unique best arm $k^\star$, $\mu^\star=\mu_{k^\star}$, and $0<\mu_k$ for all $k$, $\mu^\star<\lambda_{\max}(F)$.

\emph{Policies.} A \emph{single-copy policy} ($\pi\in\Pi_1$): at round $t$ it selects an arm $A_t$ and a finite-outcome POVM $M_t$, both measurably (possibly at random) from the record $(A_s,M_s,X_s)_{s<t}$, measures the fresh copy of $\rho_{A_t}$ with $M_t$, and observes the outcome $X_t$. A \emph{quantum-memory policy} ($\pi\in\Pi_{\mathrm{all}}$) may instead store copies in a quantum register and, in each round, apply an arbitrary record-controlled quantum instrument to the entire register before selecting the next arm; arm selection is a (possibly randomized) function of the classical record. A policy is \emph{consistent} if $R(T)=o(T^a)$ for every $a>0$ on every instance. Regret is the pseudo-regret defined above; the optimal policy of Theorem~\ref{thm:match} measures $F$'s eigenbasis, so its outcomes $f_{X_t}$ are unbiased reward samples and the two regret readings coincide for it.

\emph{Divergences.} $D(\rho\|\sigma)=\Tr\rho(\ln\rho-\ln\sigma)$ is the Umegaki relative entropy \cite{Umegaki62}; $D_M(\rho\|\sigma)=\sup_{M}\KL(P^M_\rho\|P^M_\sigma)$ the measured relative entropy, supremum over finite-outcome POVMs; $\kl(x,y)$ the Bernoulli divergence. Data processing \cite{Lindblad75,Uhlmann77} gives $D_M\le D$, with equality iff $[\rho,\sigma]=0$ whenever $D(\rho\|\sigma)<\infty$ (Lemma~\ref{lem:finite-equality} below) and operational meaning through quantum Stein's lemma \cite{HiaiPetz91,OgawaNagaoka00}. We use the variational formula of \cite{BFT17}:
\begin{equation}\label{eq:bft}
D_M(\rho\|\sigma)\;=\;\sup_{\omega\succ 0}\;\Big\{\Tr\rho\ln\omega+1-\Tr\sigma\omega\Big\},
\end{equation}
whose objective is \emph{concave} in $\omega$ (operator concavity of $\ln$; linearity elsewhere), with gradient $\nabla_\omega=D\!\ln_\omega[\rho]-\sigma$, where $D\!\ln_\omega[X]=\int_0^\infty(\omega+s)^{-1}X(\omega+s)^{-1}ds$ is the (self-adjoint) derivative of the matrix logarithm.

\begin{lemma}[Equality at finite relative entropy]\label{lem:finite-equality}
If $D(\rho\|\sigma)<\infty$, then $D_M(\rho\|\sigma)=D(\rho\|\sigma)$ if and only if $[\rho,\sigma]=0$.
\end{lemma}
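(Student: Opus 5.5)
The "if" direction is immediate. If $[\rho,\sigma]=0$, both operators are diagonal in a common orthonormal basis $\{|e_j\rangle\}$. The projective measurement in that basis gives outcome distributions $r_j,s_j$ (the joint eigenvalues), and
\[
\KL(P^M_\rho\|P^M_\sigma)=\sum_j r_j\ln(r_j/s_j)=D(\rho\|\sigma).
\]
Hence $D_M\ge D$, and combining with data processing ($D_M\le D$) yields equality.

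For the "only if" direction I would use the variational formula \eqref{eq:bft}. The Umegaki entropy has the companion variational form
\[
D(\rho\|\sigma)=\sup_{H}\{\Tr\rho H+1-\Tr\exp(H+\ln\sigma)\},
\]
and \eqref{eq:bft} is the same expression with the restriction to $H=\ln\omega$, after using $\Tr\exp(\ln\omega+\ln\sigma)\le\Tr\sigma\omega$ (Golden--Thompson). I would avoid that route, though, because its equality analysis is delicate. A cleaner plan is to work directly with finite outcome sets.

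\emph{Step 1: attainment.} First reduce to $\rho,\sigma$ supported where $D<\infty$, so that $\operatorname{supp}\rho\subseteq\operatorname{supp}\sigma$; restrict to $\operatorname{supp}\sigma$, where $\sigma\succ0$. I would show that the supremum in \eqref{eq:bft} is attained at some $\omega^\star\succ0$ on that support. This follows by compactness: the objective tends to $-\infty$ as $\|\omega\|\to\infty$ (since $\sigma\succ0$), and also as eigenvalues of $\omega$ tend to $0$ in directions where $\rho$ has weight. The rank-deficient-$\rho$ directions need care; on those, $\omega$ can be sent to $0$ consistently, and I would handle this by perturbing or by working on the support of $\rho$.

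\emph{Step 2: equality forces commutation.} Given the maximizer $\omega^\star$, the spectral projective measurement of $\omega^\star$ achieves $D_M$; this is the known result from \cite{BFT17} that the optimal measurement is projective in the eigenbasis of the optimizer. So if $D_M=D$, some projective measurement $\{Q_j\}$ satisfies $D(\rho\|\sigma)=D(\mathcal E(\rho)\|\mathcal E(\sigma))$, where $\mathcal E$ is the measure channel $X\mapsto\sum_j \Tr(Q_jX)\,|j\rangle\langle j|$. Equality in data processing, by Petz's recoverability theorem, implies that the Petz map recovers $\rho$ from $\mathcal E(\rho)$ while fixing $\sigma$. The Petz map of a measurement channel outputs
\[
\sum_j \frac{\Tr Q_j\rho}{\Tr Q_j\sigma}\,\sigma^{1/2}Q_j\sigma^{1/2}.
\]
Hence $\rho=\sigma^{1/2}\bigl(\sum_j c_jQ_j\bigr)\sigma^{1/2}$ with $c_j\ge0$. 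A sharper route is the equality condition $\ln\rho-\ln\sigma\in$ span of the $Q_j$-algebra (a Hiai--Mosonyi--Petz--Bény type characterization, equivalently equality in Golden--Thompson). That condition gives $\ln\rho=\ln\sigma+\sum_j a_jQ_j$, and with it $\Tr\rho\ln\rho-\Tr\rho\ln\sigma=\sum_j r_j a_j$. Meanwhile the classical value is $\sum_j r_j\ln(r_j/s_j)$ with $r_j=\Tr Q_j\rho$ and $s_j=\Tr Q_j\sigma$. The key point is that for $D(\rho\|\sigma)=D(\mathcal E\rho\|\mathcal E\sigma)$ with $\mathcal E$ a complete measurement (rank-one $Q_j$, which we may assume by refining, since refinement only increases the KL), the output algebra is commutative. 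The sufficiency of a commutative subalgebra then forces $\rho$ and $\sigma$ to both lie in the algebra generated by the $Q_j$ (Petz: $\rho,\sigma$ are recovered exactly, and the recovery of $\sigma$ from its diagonal means $\sigma=\sum_j s_jQ_j$). They are therefore simultaneously diagonal, so $[\rho,\sigma]=0$.

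The main obstacle is the last step: passing from exact Petz recovery to "$\sigma$ is diagonal in the measurement basis." I would try to show it directly. The Petz map $\mathcal R$ satisfies $\mathcal R(\mathcal E(\sigma))=\sigma$ automatically. What has to be used is that, since the $Q_j$ are rank-one, $\mathcal R\circ\mathcal E(\rho)=\rho$ forces $\rho=\sigma^{1/2}G\sigma^{1/2}$ with $G=\sum_j c_jQ_j$ diagonal in the basis. Symmetrically, I would run the same argument with the roles controlled by $\rho$ via the fixed-point algebra of $\mathcal R\circ\mathcal E$, which is a commutative $*$-algebra containing both $\rho$ and $\sigma$ up to the modular twist. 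Concluding commutation from this fixed-point algebra being abelian is where I expect to spend the effort.
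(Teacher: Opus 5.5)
Your ``if'' direction is fine, but the ``only if'' direction has a genuine gap at the step that carries the content. You never pass from exact recoverability to $[\rho,\sigma]=0$ (you flag this yourself), and the mechanism you sketch for it is false. Exact recovery through a rank-one measurement does not force $\sigma=\sum_js_jQ_j$; as you note, the Petz map returns $\sigma$ for every $\sigma$. The claimed conclusion already fails for $\rho=\sigma$ non-diagonal in the measurement basis. It also fails nontrivially: take $\sigma=\mathrm{diag}(0.2,0.3,0.5)$ and $\rho=\mathrm{diag}(0.1,0.15,0.75)$, measured in $\{(e_1\pm e_2)/\sqrt2,\,e_3\}$. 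The classical and Umegaki values coincide, yet $\sigma$ is not diagonal in that basis. Choosing the eigenbasis of the optimizer does not help, because $\omega^\star=\rho\sigma^{-1}=\mathrm{diag}(0.5,0.5,1.5)$ is degenerate. Only commutation can be concluded, not membership in the measurement algebra.

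Step~1 also fails as stated. For singular $\rho$ the supremum in \eqref{eq:bft} need not be attained at any $\omega\succ0$. For $\rho=|0\rangle\langle0|$, $\sigma=I/2$, it is approached only as the $|1\rangle$-eigenvalue of $\omega$ tends to $0$. Moreover, the Petz-type equality conditions you invoke need $\rho$ faithful, so singular $\rho$ is not a side issue in your route.

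The missing idea is the paper's reduction to a faithful pair. First, obtain attainment by compactness rather than from $\omega^\star$. Rank-one projective measurements suffice (Theorem~2 of \cite{BFT17}), and on $\operatorname{supp}\sigma$, where $\sigma\succ0$, the classical divergence is continuous on the compact set of orthonormal bases even when $\rho$ is singular. If the attaining measurement channel $\mathcal M$ gives equality in data processing, a recovery channel $\mathcal R$ fixes both $\rho$ and $\sigma$. By linearity it then fixes $\rho_\varepsilon=(1-\varepsilon)\rho+\varepsilon\sigma\succ0$. Data processing through $\mathcal M$ and $\mathcal R$ yields $D_M(\rho_\varepsilon\|\sigma)=D(\rho_\varepsilon\|\sigma)$. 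Proposition~5 of \cite{BFT17}, applied to this strictly positive pair, gives $[\rho_\varepsilon,\sigma]=0$, hence $[\rho,\sigma]=0$.

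If you prefer to finish your own route, run it on this faithful pair. Petz's sufficiency condition \cite{Petz86} reads $\rho_\varepsilon^{it}\sigma^{-it}=\sum_j(r_j/s_j)^{it}Q_j=e^{itA}$, with $r_j=\Tr Q_j\rho_\varepsilon$ and $A=\sum_j\ln(r_j/s_j)Q_j$. Comparing $\rho_\varepsilon^{i(u+t)}$ with $\rho_\varepsilon^{iu}\rho_\varepsilon^{it}$ forces $[e^{itA},\sigma^{iu}]=0$ for all $u,t$. Hence $[A,\ln\sigma]=0$, and $\rho_\varepsilon=e^{A+\ln\sigma}$ commutes with $\sigma$.
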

\begin{IEEEproof}
Restrict to $\operatorname{supp}\sigma$, on which $\sigma>0$. Theorem~2 of \cite{BFT17} identifies $D_M$ with the supremum over rank-one projective measurements; this supremum is attained, since the classical divergence is continuous on the compact set of orthonormal bases when $\sigma>0$, even if $\rho$ is singular. Let $\mathcal M$ be an attaining measurement channel. If $D_M=D$, equality in data processing gives a recovery channel $\mathcal R$ recovering both $\rho$ and $\sigma$ \cite{Petz86,JRSWW18}. For $0<\epsilon<1$, set $\rho_\epsilon=(1-\epsilon)\rho+\epsilon\sigma>0$. By linearity $\mathcal R\mathcal M$ also fixes $\rho_\epsilon$. Applying data processing to $\mathcal M$ and $\mathcal R$ shows that $D(\mathcal M(\rho_\epsilon)\|\mathcal M(\sigma))=D(\rho_\epsilon\|\sigma)$, hence $D_M(\rho_\epsilon\|\sigma)=D(\rho_\epsilon\|\sigma)$. Proposition~5 of \cite{BFT17}, now applied to two strictly positive states, implies $[\rho_\epsilon,\sigma]=0$, and thus $[\rho,\sigma]=0$. Conversely, a common eigenbasis attains $D_M=D$. Finiteness is essential: equality of two infinite divergences need not imply commutation.
\end{IEEEproof}

\emph{Confusing sets and exponents.} For suboptimal $k$ let $\Lambda_k=\{\sigma\in\D(\mathcal H):\Tr\sigma F\ge\mu^\star\}$, convex and compact, with nonempty interior by $\mu^\star<\lambda_{\max}(F)$. Write $p^{(k)}_i=\Tr\rho_kP_i$ and define the classical alphabet-restricted Burnetas--Katehakis functional \cite{BK96}
\begin{equation}\label{eq:kinf}
\Kinf(k)\;=\;\min\Big\{\KL\big(p^{(k)}\big\|q\big):\;q\in\Delta_L,\ \textstyle\sum_i f_iq_i\ge\mu^\star\Big\},
\end{equation}
together with the game value of a fixed POVM $M$ against the confusing set, $C^*_M(k):=\inf_{\sigma\in\Lambda_k}\KL(P^M_{\rho_k}\|P^M_\sigma)$, its optimum $C^*(k)=\sup_M C^*_M(k)$ (best single deterministic POVM), $\Minf(k)=\inf_{\sigma\in\Lambda_k}D_M(\rho_k\|\sigma)$ and $\Dinf(k)=\inf_{\sigma\in\Lambda_k}D(\rho_k\|\sigma)$.

\begin{proposition}[Exponent chain]\label{prop:chain}
$\kl(\mu_k,\mu^\star)\le\Kinf(k)\le C^*(k)\le\Minf(k)\le\Dinf(k)\le D(\rho_k\|\rho_{k^\star})$.
\end{proposition}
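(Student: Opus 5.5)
The plan is to prove the chain one link at a time, left to right. Each inequality follows from a data-processing or restriction argument, applied in the right order.

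\emph{First link.} For $\kl(\mu_k,\mu^\star)\le\Kinf(k)$, apply classical data processing. Take any feasible $q\in\Delta_L$ with $\sum_i f_iq_i\ge\mu^\star$. Coarse-grain each atom $i$ into $\{0,1\}$ by the Markov kernel that outputs $1$ with probability $f_i$. This kernel maps $p^{(k)}$ to $\mathrm{Bern}(\mu_k)$ and $q$ to $\mathrm{Bern}(\sum_i f_iq_i)$, so
\[
\KL(p^{(k)}\|q)\ge\kl\bigl(\mu_k,\textstyle\sum_i f_iq_i\bigr)\ge\kl(\mu_k,\mu^\star).
\]
The last step uses that $y\mapsto\kl(\mu_k,y)$ is nondecreasing for $y\ge\mu_k$, and $\sum_i f_i q_i\ge\mu^\star>\mu_k$. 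Minimizing over $q$ gives the claim.

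\emph{Second link.} For $\Kinf(k)\le C^*(k)$, exhibit the spectral POVM $M_F=\{P_i\}$ and show $C^*_{M_F}(k)=\Kinf(k)$. Its outcome law under $\rho_k$ is $p^{(k)}$, and under $\sigma\in\Lambda_k$ it is $q_i=\Tr\sigma P_i$. This $q$ satisfies $\sum_i f_iq_i=\Tr\sigma F\ge\mu^\star$, so $C^*_{M_F}(k)\ge\Kinf(k)$. (The reverse direction holds too, via $\sigma=\sum_i q_iP_i/\Tr P_i$, though it is not needed.) Then $C^*(k)=\sup_M C^*_M(k)\ge C^*_{M_F}(k)$.

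\emph{Third link.} For $C^*(k)\le\Minf(k)$, use the weak max--min inequality. For every fixed $M$ and $\sigma\in\Lambda_k$,
\[
\inf_{\sigma'\in\Lambda_k}\KL(P^M_{\rho_k}\|P^M_{\sigma'})\le\KL(P^M_{\rho_k}\|P^M_\sigma)\le D_M(\rho_k\|\sigma).
\]
Take the infimum over $\sigma$ and then the supremum over $M$.

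\emph{Fourth and fifth links.} For $\Minf(k)\le\Dinf(k)$, apply $D_M\le D$ (data processing, already recorded in Section~\ref{sec:model}) pointwise in $\sigma$ and take infima. For the last link, note that $\rho_{k^\star}\in\Lambda_k$ since $\Tr\rho_{k^\star}F=\mu^\star$. Hence $\Dinf(k)\le D(\rho_k\|\rho_{k^\star})$. This holds trivially if the right side is $+\infty$.

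The main care point is the first link. It needs the monotonicity of $\kl(\mu_k,\cdot)$ above $\mu_k$ and a correct handling of the boundary values $f_i\in\{0,1\}$ in the garbling. The assumption $0<\mu_k<\mu^\star<\lambda_{\max}(F)\le1$ keeps all quantities well defined. The remaining links are formal.
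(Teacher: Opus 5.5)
Your proposal is correct and follows the paper's proof essentially link by link: the Bernoulli garbling $i\mapsto\mathrm{Bern}(f_i)$ with monotonicity of $\kl(\mu_k,\cdot)$, the spectral POVM as one competitor in the supremum defining $C^*(k)$, weak duality for $C^*(k)\le\Minf(k)$, and $D_M\le D$ together with $\rho_{k^\star}\in\Lambda_k$ for the last two links. The paper also records the reverse direction, that the spectral statistics range over exactly the feasible set so that $C^*_{M_F}(k)=\Kinf(k)$; you rightly note this is not needed for the chain, though the paper reuses it later.
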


\begin{IEEEproof}
First, randomizing outcome $i$ to $\mathrm{Bern}(f_i)$ preserves means, so $\KL(p\|q)\ge\kl(\mu_k,\sum_if_iq_i)\ge\kl(\mu_k,\mu^\star)$ for feasible $q$. Second, as $\sigma$ ranges over $\Lambda_k$, the spectral statistics $q(\sigma)_i=\Tr\sigma P_i$ range over \emph{exactly} the feasible set of \eqref{eq:kinf}: feasibility is immediate, and any feasible $q$ is realized by $\sigma=\sum_iq_iP_i/\operatorname{rk}P_i$. Hence $\inf_{\sigma}\KL(P^{\{P_i\}}_{\rho_k}\|P^{\{P_i\}}_\sigma)=\Kinf(k)$, and $C^*$ is a supremum including this measurement. Third, $C^*\le\Minf$ by weak duality: $\inf_\sigma\KL(\cdot\|P^M_\sigma)\le\KL(\cdot\|P^M_{\sigma'})\le D_M(\rho_k\|\sigma')$ for every $\sigma'\in\Lambda_k$. The rest is $D_M\le D$ pointwise and $\rho_{k^\star}\in\Lambda_k$.
\end{IEEEproof}

A finite randomized choice of POVM, with the choice recorded, is itself a single POVM with labeled outcomes. The next theorem identifies a fixed spectral measurement attaining $C^*=\Minf$; a poorly chosen measurement can have a smaller game value $C^*_M$ without implying a gap between these optimized quantities.

\section{The spectral collapse}\label{sec:collapse}

\begin{theorem}[Spectral collapse]\label{thm:collapse}
Let $F=\sum_{i=1}^Lf_iP_i$ be an effect, $\rho\in\D(\mathcal H)$ with $p_i=\Tr\rho P_i$ and $\mu=\Tr\rho F$, and $\mu^\star\in(\mu,\lambda_{\max}(F))$. Then
\begin{equation}\label{eq:collapse}
\inf_{\sigma:\,\Tr\sigma F\ge\mu^\star} D_M(\rho\|\sigma)\;=\;\Kinf\big(p;f,\mu^\star\big),
\end{equation}
and the infimum is attained at the state
\begin{equation}\label{eq:witness}
\sigma^\star=\int_0^\infty\!(\omega+s)^{-1}\rho\,(\omega+s)^{-1}ds\;\oplus\!\!\sum_{i:\,p_i=0}\!q^\star_i\frac{P_i}{\operatorname{rk}P_i},
\end{equation}
where $q^\star$ optimizes \eqref{eq:kinf} and $\omega=\sum_{i:p_i>0}(p_i/q^\star_i)P_i$ on $V_+=\operatorname{ran}\sum_{i:p_i>0}P_i$. In particular, if $F$ is an orthogonal projector (of any rank, in any dimension),
\begin{equation}\label{eq:proj}
\inf_{\sigma:\,\Tr\sigma F\ge\mu^\star} D_M(\rho\|\sigma)\;=\;\kl(\mu,\mu^\star).
\end{equation}
Consequently $\Kinf(k)=C^*(k)=\Minf(k)$ for every effect $F$: the entire single-copy portion of the chain in Proposition~\ref{prop:chain} is one number.
\end{theorem}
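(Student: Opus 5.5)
The lower bound is already in hand. Proposition~\ref{prop:chain} gives $\Kinf\le C^*\le\Minf$, and its second step shows that the spectral measurement alone yields $\inf_\sigma\KL(P^{\{P_i\}}_\rho\|P^{\{P_i\}}_\sigma)=\Kinf$. So the content of the theorem is the upper bound $\Minf\le\Kinf$. It suffices to exhibit one state $\sigma^\star\in\Lambda$ with $D_M(\rho\|\sigma^\star)\le\Kinf(p;f,\mu^\star)$.

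I will use the variational formula~\eqref{eq:bft}. Its objective $\omega\mapsto\Tr\rho\ln\omega+1-\Tr\sigma^\star\omega$ is concave, so any critical point $\omega$ is a global maximizer. The witness~\eqref{eq:witness} is built exactly so that $\omega=\sum_{p_i>0}(p_i/q^\star_i)P_i$ is critical: on $V_+$ we have $\nabla_\omega=D\!\ln_\omega[\rho]-\sigma^\star=0$ by definition.

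Evaluating the objective at this $\omega$ gives the value. Since $\omega$ commutes with each $P_i$,
\[
\Tr\rho\ln\omega=\sum_{p_i>0}p_i\ln(p_i/q^\star_i)=\KL(p\|q^\star).
\]
Also $\Tr\sigma^\star\omega=\Tr\rho\,\omega\,\omega^{-1}$, using $\Tr(\omega+s)^{-1}\rho(\omega+s)^{-1}\omega$ integrated over $s$, which equals $\Tr\rho\,\omega^{-1}\omega=1$ on $V_+$. The objective therefore equals $\KL(p\|q^\star)=\Kinf$.

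Several technical points need care before the concavity argument is valid.

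\emph{The complement of $V_+$.} First, $\rho$ is supported in $V_+$, since $p_i=0$ forces $P_i\rho P_i=0$ and hence $P_i\rho=0$; so $\sigma^\star$ is block diagonal. Second, I must handle $\omega$ on $V_+^\perp$, where it is absent from~\eqref{eq:bft}. I will extend the supremum over $\omega=\omega_+\oplus\omega_-$. Because $\rho$ vanishes there, $\Tr\rho\ln\omega$ depends only on $\omega_+$, while $-\Tr\sigma^\star\omega$ is decreasing in $\omega_-$. The supremum is therefore approached as $\omega_-\to0$ and equals the $V_+$-restricted problem. Concavity on $V_+$, together with the gradient condition, then gives the maximum there.

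\emph{Positivity of $q^\star$ and membership in $\Lambda$.} I need $q^\star_i>0$ whenever $p_i>0$, which holds because $\KL(p\|q^\star)<\infty$. I must also check $\sigma^\star\in\Lambda$, which needs a KKT analysis of~\eqref{eq:kinf}. Since $D\!\ln_\omega$ commutes with the pinching onto the eigenspaces of $\omega$, we get $\Tr\sigma^\star P_i=\Tr\rho P_i\cdot q^\star_i/p_i=q^\star_i$ for $p_i>0$, and the added block supplies $q^\star_i$ for $p_i=0$. Hence $\Tr\sigma^\star F=\sum_i f_iq^\star_i\ge\mu^\star$ and $\Tr\sigma^\star=1$, so $\sigma^\star$ is a state in $\Lambda$ and positive semidefinite as the integral of positive terms.

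\emph{Eigenspaces of $\omega$.} If several $P_i$ share the same ratio $p_i/q^\star_i$, the eigenspaces of $\omega$ are coarser than the $P_i$. The pinching argument still holds, because each $P_i$ lies inside an eigenspace of $\omega$, so $\Tr\sigma^\star P_i=\Tr\rho\,\omega^{-1}P_i$ remains true.

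\emph{Projector case.} For $F$ a projector, $L=2$ with $f\in\{0,1\}$, so~\eqref{eq:kinf} is minimized at $q=(1-\mu^\star,\mu^\star)$ and gives $\kl(\mu,\mu^\star)$, which is~\eqref{eq:proj}. The final equalities $\Kinf=C^*=\Minf$ follow by sandwiching with Proposition~\ref{prop:chain}.

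The main obstacle is the rigorous passage from "critical point of a concave function on the open cone $\omega\succ0$" to "global supremum" when $\rho$ is singular and $\omega$ lives only on $V_+$. I would handle it by the restriction argument in the first technical point: show the supremum over full-space $\omega$ equals the supremum over $V_+$-blocks. The identity $\Tr\sigma^\star\omega=1$, which relies on the integral representation of $D\!\ln$, should be checked carefully at the same point.
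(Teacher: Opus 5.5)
Your overall route is the paper's. You take the lower bound from Proposition~\ref{prop:chain}, then use the witness $\sigma^\star$, with $\omega=\sum_{p_i>0}(p_i/q^\star_i)P_i$ as a critical point of the concave objective in \eqref{eq:bft}, and your trace computations match the paper's. The gap is in the step you yourself flag as the main obstacle. Your restriction argument only treats block-diagonal $\omega=\omega_+\oplus\omega_-$. The supremum in \eqref{eq:bft}, however, runs over all $\omega\succ0$ on $\mathcal H$, including those with nonzero off-diagonal blocks $\Pi_+\omega\Pi_-$. For such $\omega$ the claim that $\Tr\rho\ln\omega$ depends only on $\omega_+$ is false. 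One has $\Tr\rho\ln\omega=\Tr\rho\,\Pi_+(\ln\omega)\Pi_+$, and in general $\Pi_+(\ln\omega)\Pi_+\neq\ln(\Pi_+\omega\Pi_+)$. So nothing you wrote rules out that coupling $V_+$ to $V_+^\perp$ pushes the objective above its $V_+$-restricted maximum $\Kinf$. This case arises whenever some $p_i=0$, which the theorem explicitly covers.

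The missing ingredient is an inequality showing that off-diagonal blocks can only lower the objective $g(\omega)=\Tr\rho\ln\omega+1-\Tr\sigma^\star\omega$. The paper obtains it from the operator Jensen inequality for the compression $X\mapsto\Pi_+X\Pi_+$, namely $\Pi_+(\ln\omega')\Pi_+\preceq\ln(\Pi_+\omega'\Pi_+)$. Since $\rho=\Pi_+\rho\Pi_+$ and $\sigma^\star$ is block diagonal, this bounds $g(\omega')$ by the $V_+$-objective at $\Pi_+\omega'\Pi_+$ minus $\Tr[\sigma^\star\,\Pi_-\omega'\Pi_-]\ge0$.

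An equally short repair uses only tools you already invoke. Set $U=\Pi_+-\Pi_-$. Since $\rho$ and $\sigma^\star$ commute with $U$, we have $g(U\omega U^*)=g(\omega)$. Concavity then gives $g(\Pi_+\omega\Pi_++\Pi_-\omega\Pi_-)=g(\tfrac12(\omega+U\omega U^*))\ge g(\omega)$. This reduces the problem to block-diagonal $\omega$, after which your monotonicity-in-$\omega_-$ argument and the critical-point argument on $V_+$ go through.

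Two smaller points. You assert $\KL(p\|q^\star)<\infty$, and hence $q^\star_i>0$ on $\operatorname{supp}p$, without justification. You should note that a minimizer exists and exhibit a feasible $q$ with finite divergence, e.g.\ $(1-\varepsilon)p+\varepsilon\delta_{\mathrm{top}}$, which is where $\mu^\star<\lambda_{\max}(F)$ enters. Also, no KKT analysis is needed for $\sigma^\star\in\Lambda$: feasibility of $q^\star$ already gives $\Tr\sigma^\star F\ge\mu^\star$.
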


\begin{IEEEproof}
($\ge$) was shown in Proposition~\ref{prop:chain}: every $\sigma\in\Lambda$ induces a feasible $q(\sigma)$, and the spectral POVM is one competitor in the supremum defining $D_M$.

($\le$). An optimizer $q^\star$ of \eqref{eq:kinf} exists (compact feasible set, lower-semicontinuous objective) and is finite: $q=(1-\varepsilon)p+\varepsilon\,\delta_{\mathrm{top}}$ is feasible for $\varepsilon\ge(\mu^\star-\mu)/(f_{\max}-\mu)<1$ and has $\KL(p\|q)<\infty$. Finiteness forces $q^\star_i>0$ wherever $p_i>0$, and optimality forces the constraint to bind, $\sum_if_iq^\star_i=\mu^\star$. Since $\Tr\rho P_i=0$ implies $P_i\rho=0$, the state $\rho$ is supported on $V_+$, on which $\omega\succ0$. Set $\tilde\sigma=D\!\ln_\omega[\rho]$ on $V_+$. Then: (i) $\tilde\sigma\succeq0$, as an integral of congruences of $\rho\succeq0$ with integrand $O(s^{-2})$, and $\ker\tilde\sigma\subseteq\ker\rho$ (if $\tilde\sigma u=0$ then $\rho^{1/2}(\omega+s)^{-1}u=0$ for every $s\ge0$; linear independence of $s\mapsto(\omega_a+s)^{-1}$ across distinct eigenvalues of $\omega$ then forces $\rho^{1/2}u=0$), so $\operatorname{supp}\rho\subseteq\operatorname{supp}\sigma^\star$ and $D_M(\rho\|\sigma^\star)<\infty$ a priori; (ii) using self-adjointness of $D\!\ln_\omega$ and $[P_i,\omega]=0$, $\Tr\tilde\sigma P_i=\Tr[\rho\,P_i/\omega_i]=q^\star_i$ for $p_i>0$, hence $\Tr\tilde\sigma=1-m$ with $m=\sum_{i:p_i=0}q^\star_i$. Thus $\sigma^\star$ in \eqref{eq:witness} is a state with $\Tr\sigma^\star F=\sum_if_iq^\star_i=\mu^\star$, so $\sigma^\star\in\Lambda$.

We evaluate $D_M(\rho\|\sigma^\star)$ with \eqref{eq:bft}. Write $g(\omega')=\Tr\rho\ln\omega'+1-\Tr\sigma^\star\omega'$ and $\Pi_\pm$ for the projectors onto $V_+$ and $V_+^\perp$. For any $\omega'\succ0$, the operator Jensen inequality for the unital positive compression $X\mapsto\Pi_+X\Pi_+$ and the operator concave $\ln$ gives $\Pi_+(\ln\omega')\Pi_+\preceq\ln(\Pi_+\omega'\Pi_+)$ on $V_+$; since $\rho=\Pi_+\rho\Pi_+$ and $\sigma^\star$ is block diagonal,
\[
g(\omega')\;\le\;\big[\Tr\rho\ln\omega_+ +1-\Tr\tilde\sigma\,\omega_+\big]\;-\;\Tr\big[(\sigma^\star-\tilde\sigma)\,\Pi_-\omega'\Pi_-\big]
\]
with $\omega_+=\Pi_+\omega'\Pi_+\succ0$; the last term is $\le0$ and vanishes as the $V_+^\perp$ block tends to $0^+$. So the supremum in \eqref{eq:bft} equals its restriction to $V_+$, where the objective is concave with gradient $D\!\ln_{\omega_+}[\rho]-\tilde\sigma$, which \emph{vanishes at $\omega_+=\omega$ by construction of $\tilde\sigma$}. A critical point of a concave function is its global maximum, so
\[
D_M(\rho\|\sigma^\star)=g(\omega)=\underbrace{\sum_{i:p_i>0}p_i\ln\frac{p_i}{q^\star_i}}_{=\,\Kinf}+1-\underbrace{\Tr\big[\rho\,D\!\ln_\omega[\omega]\big]}_{=\,\Tr[\rho\,\omega\omega^{-1}]=1},
\]
which is $\Kinf$. For projector $F$ the classical program \eqref{eq:kinf} is the two-atom case; the tilt $q^\star=(\mu^\star,1-\mu^\star)$ on the atoms $(F,I-F)$ gives $\Kinf=\kl(\mu,\mu^\star)$, matched by the Bernoulli data-processing lower bound.
\end{IEEEproof}

\begin{corollary}[Saddle point of the measurement game]\label{cor:saddle}
Let $M^\star=\{P_i\}$ be the spectral measurement of $F$, $\sigma^\star$ the state \eqref{eq:witness}, and $\Lambda=\{\sigma\in\D(\mathcal H):\Tr\sigma F\ge\mu^\star\}$. For every finite-outcome POVM $M$ and every $\sigma\in\Lambda$,
\[
\KL\big(P^M_\rho\big\|P^M_{\sigma^\star}\big)\ \le\ \Kinf(p;f,\mu^\star)\ \le\ \KL\big(P^{M^\star}_\rho\big\|P^{M^\star}_\sigma\big).
\]
Hence $(M^\star,\sigma^\star)$ is a saddle point of $g(M,\sigma):=\KL(P^M_\rho\|P^M_\sigma)$ on $\{\text{POVMs}\}\times\Lambda$, and
\[
\sup_M\inf_{\sigma\in\Lambda}g(M,\sigma)\;=\;\inf_{\sigma\in\Lambda}\sup_M g(M,\sigma)\;=\;\Kinf(p;f,\mu^\star).
\]
\end{corollary}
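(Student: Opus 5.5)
The corollary should follow from Theorem~\ref{thm:collapse} and Proposition~\ref{prop:chain} by reading off the two halves of the saddle inequality separately and then invoking weak duality.

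\emph{Left inequality.} For any finite-outcome POVM $M$, the definition of the measured relative entropy as a supremum over POVMs gives $\KL(P^M_\rho\|P^M_{\sigma^\star})\le D_M(\rho\|\sigma^\star)$. The proof of Theorem~\ref{thm:collapse} evaluates the right-hand side exactly through the variational formula \eqref{eq:bft}. The objective there is concave and has a critical point at $\omega$ on $V_+$, with the $V_+^\perp$ block handled by operator Jensen, so $D_M(\rho\|\sigma^\star)=g(\omega)=\Kinf(p;f,\mu^\star)$. This is where the analytic work lies, and it is already done in the theorem. I would cite it directly, noting only that $\sigma^\star\in\Lambda$ (shown there: $\Tr\sigma^\star F=\mu^\star$) and that $D_M(\rho\|\sigma^\star)<\infty$ because $\operatorname{supp}\rho\subseteq\operatorname{supp}\sigma^\star$.

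\emph{Right inequality.} Fix $\sigma\in\Lambda$ and set $q(\sigma)_i=\Tr\sigma P_i$. Then $q(\sigma)\in\Delta_L$ and $\sum_i f_iq(\sigma)_i=\Tr\sigma F\ge\mu^\star$, so $q(\sigma)$ is feasible for \eqref{eq:kinf}. Since $P^{M^\star}_\rho=p$ and $P^{M^\star}_\sigma=q(\sigma)$, we get $\KL(P^{M^\star}_\rho\|P^{M^\star}_\sigma)=\KL(p\|q(\sigma))\ge\Kinf$. This is the second step of Proposition~\ref{prop:chain}, restated for a general $\rho$ and threshold $\mu^\star$.

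\emph{Minimax equality.} Taking $\sup_M$ in the left inequality gives $\inf_{\sigma\in\Lambda}\sup_M g\le\sup_M g(M,\sigma^\star)\le\Kinf$. Taking $\inf_{\sigma\in\Lambda}$ in the right inequality gives $\sup_M\inf_\sigma g\ge\inf_\sigma g(M^\star,\sigma)\ge\Kinf$. Combining these with the general weak-duality inequality $\sup_M\inf_\sigma g\le\inf_\sigma\sup_M g$ yields the chain
\[
\Kinf\le\sup_M\inf_\sigma g\le\inf_\sigma\sup_M g\le\Kinf,
\]
so all three quantities are equal. Evaluating the two displayed inequalities at $M=M^\star$ and $\sigma=\sigma^\star$ gives $g(M^\star,\sigma^\star)=\Kinf$, which is the saddle-point property.

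The only point needing care, and hence the main obstacle, is that the left inequality must hold for every POVM at the single state $\sigma^\star$. This is exactly the content of the critical-point argument in Theorem~\ref{thm:collapse}, so once that theorem is taken as given, the corollary involves no further difficulty.
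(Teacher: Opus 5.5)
Your proposal is correct and follows the paper's proof essentially step for step. The left inequality comes from $\KL(P^M_\rho\|P^M_{\sigma^\star})\le D_M(\rho\|\sigma^\star)=\Kinf$ via Theorem~\ref{thm:collapse}, the right one from feasibility of $q(\sigma)$ in \eqref{eq:kinf}, and the common value from the same sup--inf/inf--sup sandwich; the only cosmetic difference is that you get $g(M^\star,\sigma^\star)=\Kinf$ by evaluating both inequalities at the pair (using $\sigma^\star\in\Lambda$), whereas the paper cites $q(\sigma^\star)=q^\star$, and both work.
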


\begin{IEEEproof}
The left inequality is $\KL(P^M_\rho\|P^M_{\sigma^\star})\le D_M(\rho\|\sigma^\star)=\Kinf$, by Theorem~\ref{thm:collapse}. For the right one, $P^{M^\star}_\sigma=q(\sigma)$ is feasible for \eqref{eq:kinf} whenever $\sigma\in\Lambda$ (Proposition~\ref{prop:chain}), so $\KL(p\|q(\sigma))\ge\Kinf$; at $(M,\sigma)=(M^\star,\sigma^\star)$ both inequalities hold with equality, since $q(\sigma^\star)=q^\star$. Finally
\begin{align*}
\sup_M\inf_{\sigma}g\ &\ge\ \inf_{\sigma}g(M^\star,\sigma)\ \ge\ \Kinf\ =\ D_M(\rho\|\sigma^\star)\\
&=\ \sup_M g(M,\sigma^\star)\ \ge\ \inf_{\sigma}\sup_M g,
\end{align*}
while $\sup\inf\le\inf\sup$ always. The interchange itself is a special case of the minimax lemma of \cite{BBH21} (Lemma~A.2 there, for a closed convex alternative set); the content of the corollary is the value $\Kinf$ and the explicit pair attaining it.
\end{IEEEproof}

\begin{remark}[Classical consistency]\label{rem:classical}
If $[\rho,F]=0$, then $\sigma^\star$ is the classical reverse-KL projection and \eqref{eq:collapse} restates known facts about $\Kinf$ \cite{BK96,HondaTakemura15}; for indicator rewards, $\Kinf$ reduces to $\kl$ of the means by Bernoulli sufficiency. Theorem~\ref{thm:collapse} is the exact quantum generalization --- and it says the generalization adds \emph{nothing} on the single-copy side.
\end{remark}

\begin{remark}[Structure of the witness]\label{rem:witness}
The witness rescales each coherence $\rho_{ab}$ (in the eigenbasis of $\omega$) by the reciprocal logarithmic mean $1/L(\omega_a,\omega_b)$ --- the Kubo--Mori geometry. At $\sigma^\star$, the optimal measurement is the spectral measurement of $F$ itself: the supremum in \eqref{eq:bft} is approached along $F$-diagonal $\omega$, and attained at $\omega$ when $q^\star$ has no off-support mass. Everything one copy can say against the confusing set is said by measuring the reward observable.
\end{remark}

\section{Matching bounds for single-copy policies}\label{sec:single}

\begin{theorem}[Single-copy converse; adaptivity is useless]\label{thm:lbsc}
For every consistent $\pi\in\Pi_1$ and every suboptimal arm $k$,
\[
\liminf_{T\to\infty}\ \frac{\mathbb E[N_k(T)]}{\ln T}\;\ge\;\frac{1}{\Minf(k)}\;\overset{\mathrm{Thm.}~\ref{thm:collapse}}{=}\;\frac{1}{\Kinf(k)}.
\]
\end{theorem}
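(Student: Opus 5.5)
We follow the classical change-of-measure route of Lai--Robbins and Burnetas--Katehakis, carried out on the full interaction transcript $H_T=(A_s,M_s,X_s)_{s\le T}$, which records the adaptively and possibly randomly chosen POVMs. Fix a suboptimal arm $k$ and $\varepsilon>0$. We pick an alternative state $\sigma\in\Lambda_k$ whose mean is strictly above $\mu^\star$ and whose measured divergence is close to the optimum, $D_M(\rho_k\|\sigma)\le\Minf(k)+\varepsilon$. Such a $\sigma$ exists by Theorem~\ref{thm:collapse}, which provides the attaining witness $\sigma^\star$ with $\Tr\sigma^\star F=\mu^\star$. We mix it slightly toward a state of mean above $\mu^\star$ (possible because $\mu^\star<\lambda_{\max}(F)$). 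Joint convexity of $D_M$, or continuity along the segment, keeps the divergence finite and within $\varepsilon$, since $\operatorname{supp}\rho_k\subseteq\operatorname{supp}\sigma^\star$. The modified instance $\nu'$ replaces $\rho_k$ by $\sigma$ and leaves the other arms unchanged. In $\nu'$, arm $k$ is the unique best arm.

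The first step is the divergence decomposition. Conditioned on the past, the policy's randomization over $(A_t,M_t)$ has the same kernel under $\nu$ and $\nu'$. The outcome distribution also coincides unless $A_t=k$, in which case it is $P^{M_t}_{\rho_k}$ versus $P^{M_t}_\sigma$. The chain rule for KL divergence therefore gives
\[
\KL(\mathbb P^{\nu}_{H_T}\|\mathbb P^{\nu'}_{H_T})=\mathbb E_\nu\Big[\sum_{t\le T}\mathbf 1\{A_t=k\}\KL\big(P^{M_t}_{\rho_k}\|P^{M_t}_\sigma\big)\Big]\le \mathbb E_\nu[N_k(T)]\,D_M(\rho_k\|\sigma).
\]
Each term is bounded by $D_M$ because $M_t$ is a finite-outcome POVM. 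This is exactly where adaptivity and randomization of the measurement are neutralized: whatever POVM is chosen, it pays at most the sup. If the POVM choice has uncountably many options, we handle it by conditioning on the record, using a regular conditional distribution. Since the POVM is recorded in the transcript, no labeling issue arises.

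The second step is the standard high-probability transfer. By data processing for the event $E=\{N_k(T)\le T/2\}$, we have $\KL\ge \kl(\mathbb P_\nu(E),\mathbb P_{\nu'}(E))\ge \mathbb P_\nu(E)\ln\frac1{\mathbb P_{\nu'}(E)}-\ln 2$. Consistency under $\nu$ gives $\mathbb P_\nu(E)\to1$, via Markov's inequality on the pulls of the best arm. Consistency under $\nu'$, where arm $k$ is optimal with positive gaps, gives $\mathbb P_{\nu'}(E)\le \mathbb E_{\nu'}[T-N_k(T)]/(T/2)=o(T^{a-1})$ for every $a>0$. Hence $\liminf \mathbb E_\nu[N_k(T)]/\ln T\ge (1-a)/(\Minf(k)+\varepsilon)$. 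Letting $a,\varepsilon\to0$ gives the bound with $1/\Minf(k)$, and Theorem~\ref{thm:collapse} converts this to $1/\Kinf(k)$.

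The main obstacle is the construction of $\nu'$. The alternative must lie strictly inside $\{\Tr\sigma F>\mu^\star\}$, so that arm $k$ is uniquely optimal, while $D_M(\rho_k\|\sigma)$ stays within $\varepsilon$ of $\Minf(k)$. It must also satisfy the paper's instance conditions, including $\mu_k>0$ and $\mu^{\star\prime}<\lambda_{\max}(F)$. I would first try $\sigma_\eta=(1-\eta)\sigma^\star+\eta\tau$, where $\tau$ is a state with $\Tr\tau F\in(\mu^\star,\lambda_{\max}(F))$ chosen full rank on $V_+$. Convexity gives $D_M(\rho_k\|\sigma_\eta)\le(1-\eta)\Kinf(k)+\eta D_M(\rho_k\|\tau)$, which tends to $\Kinf(k)$. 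The resulting best mean lies strictly below $\lambda_{\max}(F)$ for small $\eta$.
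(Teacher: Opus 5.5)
Your proposal is correct and follows essentially the paper's argument: a chain-rule decomposition of the transcript divergence with each round bounded by $D_M(\rho_k\|\sigma)$ whatever POVM was chosen, a consistency-based lower bound on that divergence, and the perturbation $(1-\eta)\sigma^\star+\eta\tau$ controlled by convexity of $D_M$ in its second argument. Where the paper invokes the Garivier--M\'enard--Stoltz fundamental inequality with $Z=N_k(T)/T$, you apply data processing to the event $\{N_k(T)\le T/2\}$, which is an equivalent standard step. One clarification: $\tau$ needs $\operatorname{supp}\tau\supseteq V_+$, but it generally cannot be supported inside $V_+$, since its mean must exceed $\mu^\star$ and $\mu^\star$ may exceed every $f_i$ with $p_i>0$. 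Taking $\tau\succ0$, as the paper does, is the clean choice.
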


\begin{IEEEproof}
Fix $\sigma\in\operatorname{int}\Lambda_k$ and the confused instance $\nu'$ in which arm $k$ carries $\sigma$. The transcript law factorizes round by round; by the chain rule and the definition of $D_M$ as a supremum over POVMs,
\[
\KL\big(\mathbb P^T_\nu\big\|\mathbb P^T_{\nu'}\big)=\sum_{t\le T}\mathbb E_\nu\Big[\mathbf 1\{A_t{=}k\}\,\KL\big(P^{M_t}_{\rho_k}\big\|P^{M_t}_\sigma\big)\Big]
\]
\[
\le\;\mathbb E_\nu[N_k(T)]\cdot D_M(\rho_k\|\sigma),
\]
\emph{regardless of how the $M_t$ were chosen}. The fundamental inequality of \cite{GMS19} with $Z=N_k(T)/T$ and consistency on both instances give $\KL(\mathbb P_\nu\|\mathbb P_{\nu'})\ge(1-o(1))\ln T$. It remains to drive $D_M(\rho_k\|\sigma)$ to $\Minf(k)$ within $\operatorname{int}\Lambda_k$: take $\sigma_\varepsilon=(1-\varepsilon)\sigma^\star+\varepsilon\tau$ with $\tau\succ0$, $\Tr\tau F>\mu^\star$ (which exists by $\mu^\star<\lambda_{\max}(F)$); then $\Tr\sigma_\varepsilon F>\mu^\star$ and, by convexity of $D_M$ in its second argument, $D_M(\rho_k\|\sigma_\varepsilon)\le(1-\varepsilon)\Kinf(k)+\varepsilon\,D_M(\rho_k\|\tau)$ with the last term finite. Let $\varepsilon\downarrow0$.
\end{IEEEproof}

\begin{remark}[Storing copies does not help without joint measurements]\label{rem:delay}
Let $\Pi_{\mathrm{delay}}\supseteq\Pi_1$ consist of policies that may store copies in a register and measure each stored copy at that or any later round, once, with a finite-outcome POVM chosen (possibly at random) from the classical record at that time, but never jointly with another copy. The proof of Theorem~\ref{thm:lbsc} applies verbatim: the transcript law still factorizes over measurement events; each event measures one copy of a known arm, so its contribution to the transcript divergence is at most $D_M(\rho_k\|\sigma)$ when that arm is $k$ and zero otherwise, unmeasured copies contribute nothing, and the number of measured copies of arm $k$ by round $T$ is at most $N_k(T)$. Hence the floor $1/\Kinf(k)$ of Theorem~\ref{thm:lbsc} holds for every consistent policy in $\Pi_{\mathrm{delay}}$: any improvement of the leading constant requires measurements outside this class, acting on several copies at once or returning to a copy already measured.
\end{remark}

\begin{theorem}[Matching achievability and optimality]\label{thm:match}
Let \textsc{Spectral-KL-UCB} measure every pulled copy with the spectral POVM $\{P_i\}$ of $F$ and play the index policy of Section~\ref{sec:finite}. Then for every suboptimal $k$,
$\limsup_T\mathbb E[N_k(T)]/\ln T\le1/\Kinf(k)$; combined with Theorem~\ref{thm:lbsc},
\[
\lim_{T\to\infty}\frac{R(T)}{\ln T}\;=\;\sum_{k\ne k^\star}\frac{\Delta_k}{\Kinf(k)}
\qquad\big(\overset{F\ \mathrm{proj.}}{=}\ \sum_{k\ne k^\star}\tfrac{\Delta_k}{\kl(\mu_k,\mu^\star)}\big),
\]
and \textsc{Spectral-KL-UCB} is asymptotically instance-optimal in $\Pi_1$. \emph{No consistent single-copy strategy --- adaptive, randomized, tomographic, or otherwise --- improves the leading constant over measuring the reward observable itself.}
\end{theorem}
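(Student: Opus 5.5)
The plan is to reduce the statement to a classical finite-alphabet bandit. Once every pulled copy is measured with the spectral POVM $\{P_i\}$, arm $k$ produces i.i.d.\ outcomes $X\in\{1,\dots,L\}$ with law $p^{(k)}_i=\Tr\rho_kP_i$. The reward value $f_X$ has mean $\sum_if_ip^{(k)}_i=\mu_k$, so the pseudo-regret of \textsc{Spectral-KL-UCB} is exactly the pseudo-regret of the classical empirical KL-UCB index run on the distributions $p^{(k)}$, all supported on the known atoms $f_1,\dots,f_L\subset[0,1]$. The quantum structure then plays no further role on the achievability side. The upper bound $\limsup_T\mathbb E[N_k(T)]/\ln T\le 1/\Kinf(k)$ is the classical Burnetas--Katehakis-optimality of empirical KL-UCB for finitely supported distributions \cite{MMS11,CGMMS13}. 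In that setting the relevant constant is precisely the alphabet-restricted functional \eqref{eq:kinf}: $\min\{\KL(p^{(k)}\|q): q\in\Delta_L,\ \sum_if_iq_i\ge\mu^\star\}$.

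I would make this step self-contained by following the standard decomposition. Fix $\varepsilon>0$. Split $N_k(T)$ into two parts. The first consists of rounds where the index of $k^\star$ falls below $\mu^\star-\varepsilon$; the index is concave-type and a sup over the empirical-KL ball, so a finite-support Sanov/maximal concentration bound (the method-of-types inequality for $\KL(\hat p_n\|p)$, with $O(n^{L})$ type-counting) bounds the expected number of these rounds by $O(\ln\ln T)$ or $o(\ln T)$. The second consists of rounds where arm $k$'s index exceeds $\mu^\star-\varepsilon$. This requires $n\,\Kinf(\hat p_{k,n};f,\mu^\star-\varepsilon)\le\ln t+c\ln\ln t$. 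By continuity of $\Kinf$ in $(p,\mu^\star)$ at $p^{(k)}$, this forces $n\le(1+\varepsilon)\ln T/(\Kinf(k)-\delta(\varepsilon))$ outside an event of summable probability. Continuity needs care when $p^{(k)}$ has zero atoms and at the top atom. Here the nondegeneracy assumptions $\mu_k>0$ and $\mu^\star<\lambda_{\max}(F)=f_{\max}$ ensure that the feasible set has a Slater point, and that the dual representation $\Kinf=\max_{\lambda\in[0,1/(f_{\max}-\mu^\star)]}\sum_ip_i\ln(1-\lambda(f_i-\mu^\star))$ is jointly continuous. Letting $\varepsilon\downarrow0$ gives the upper bound. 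The explicit finite-time constants of Section~\ref{sec:finite} would serve as the detailed version; the asymptotic statement only needs these limits.

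The rest is assembly. Theorem~\ref{thm:lbsc} gives $\liminf_T\mathbb E[N_k(T)]/\ln T\ge1/\Kinf(k)$ for any consistent $\pi\in\Pi_1$. \textsc{Spectral-KL-UCB} is itself consistent, since its upper bound is logarithmic on every instance, so the lower bound applies to it, and the liminf and limsup coincide. Summing with weights $\Delta_k$ over the finitely many suboptimal arms yields $\lim_TR(T)/\ln T=\sum_{k\ne k^\star}\Delta_k/\Kinf(k)$, which is optimal in $\Pi_1$. For projector $F$, equation \eqref{eq:proj} of Theorem~\ref{thm:collapse} gives $\Kinf(k)=\kl(\mu_k,\mu^\star)$, which yields the parenthetical formula. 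The final italicized sentence is a restatement: Theorem~\ref{thm:lbsc} covers adaptive and randomized POVM choices, because a randomized choice is a labeled POVM. The main obstacle is the achievability concentration step, especially handling boundary distributions with $p_i=0$, where the continuity of $\Kinf$ and the uniform deviation bound for the empirical-KL index must be checked with care.
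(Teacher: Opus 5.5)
Your plan is correct, but the achievability half takes a different route from the paper's. The paper puts no slack in the level $\mu^\star$. It splits on $\{U_{k^\star}(t)<\mu^\star\}$ at the exact level and controls that event with the time-uniform Krichevsky--Trofimov mixture supermartingale and Ville's inequality (Lemma~\ref{lem:mixture}). This handles the random count $n_{k^\star}(t)$ without a union bound over $n$ and gives probability at most $1/(t\ln^3t)$ per round. For arm $k$, the paper replaces continuity of $\Kinf$ by the one-sided supporting inequality $\Kinf(\hat q)\ge\Kinf(q_k)-G_k\|\hat q-q_k\|_1$ of Lemma~\ref{lem:dual}. That inequality holds on the face $\Delta_S$ with $S=\operatorname{supp}q_k$, which is exactly where the empirical laws of arm $k$ live, and its coefficients stay finite even when the dual maximizer sits at $\bar\lambda$. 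The $L_1$ bound of Lemma~\ref{lem:l1} then finishes the argument. This buys explicit finite-time constants (Theorem~\ref{thm:finite}, followed by $\delta_T\downarrow0$) and makes boundary distributions painless.

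Your $\varepsilon$-slack/continuity argument is the classical, purely asymptotic alternative. It works, but two steps need more than your sketch provides. First, in the optimal-arm term the slack must enter the exponent. Take a type $\tau$ with mean below $\mu^\star-\varepsilon$, mix a minimizer for $\Kinf(\tau;f,\mu^\star)$ with $\tau$, and use convexity of $\KL(\tau\|\cdot)$. This gives $\KL(\tau\|q_{k^\star})\ge\Kinf(\tau;f,\mu^\star-\varepsilon)/(1-\varepsilon)$, and the resulting $t^{-\varepsilon}$ factor beats the polynomial number of types. By contrast, a bare type count intersected with a Hoeffding cutoff leaves a polylogarithmic factor; with only $3\ln\ln t$ in the threshold, that bound is not $o(\ln T)$ for large $L$. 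Second, $\Kinf$ is indeed jointly continuous in $(p,\mu^\star)$, but not because the dual objective is: that objective equals $-\infty$ at $\lambda=\bar\lambda$ whenever the top atom carries mass. Lower semicontinuity follows from the supremum over $\lambda<\bar\lambda$. Upper semicontinuity follows from the continuous majorant obtained by dropping the nonpositive top-atom term, which is tight when that atom is empty; elsewhere the maximizer stays locally away from $\bar\lambda$.

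Finally, \cite{CGMMS13} analyzes a different threshold, so it is your self-contained argument, not the citation, that covers this policy. Your remark that \textsc{Spectral-KL-UCB} is itself consistent, which is needed to apply Theorem~\ref{thm:lbsc} to it, makes explicit a step the paper leaves implicit.
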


The induced problem \emph{is} a classical bandit whose arm distributions live on the known finite alphabet $\operatorname{spec}F$, and by Proposition~\ref{prop:chain} the quantum confusing set induces exactly the alphabet-restricted classical confusing set, so the constants agree on both sides. Section~\ref{sec:finite} proves Theorem~\ref{thm:match} with explicit finite-time constants.

\begin{remark}[The known alphabet helps]
Nature is confined to spectral distributions on $\operatorname{spec}F$, so the relevant $\Kinf$ is the alphabet-restricted one --- in general \emph{larger} (hence a smaller regret constant) than the unrestricted $[0,1]$-support functional of \cite{HondaTakemura15} (cf.\ the finite-support analyses of \cite{MMS11,CGMMS13}). The quantum structure thus mildly \emph{helps} the learner relative to generic bounded classical rewards.
\end{remark}

\section{Finite-time analysis of {\sc Spectral-KL-UCB}}\label{sec:finite}

This section proves Theorem~\ref{thm:match} with explicit constants for a general effect $F$, closing the achievability side of Section~\ref{sec:single}; the only imports are the classical concentration lemmas below.

\emph{The index.} Let $n_k(t)=N_k(t-1)$ be the number of pulls before action $A_t$, and form $\hat q_k(t)$ from those samples. After pulling every arm once, at rounds $t\ge\max\{K+1,3\}$ play $\arg\max_k U_k(t)$ with
\begin{align*}
U_k(t)=\max\big\{\langle f,q\rangle:\ &q\in\Delta_L,\\
&n_k(t)\,\KL(\hat q_k(t)\|q)\le\ell(t,n_k(t))\big\},
\end{align*}
\[
\ell(t,n)\;=\;\ln t+3\ln\ln t+\tfrac{L-1}{2}\ln n+c_L,
\]
where $\hat q_k(t)$ is the empirical spectral distribution of arm $k$, $q_k:=p^{(k)}$ is its population law (so that $\Kinf(k)=\Kinf(q_k)$ in the notation of Lemma~\ref{lem:dual}), and $c_L$ is the Krichevsky--Trofimov constant of Lemma~\ref{lem:mixture} below. This is the empirical (finite-support) KL-UCB index of \cite{MMS11,CGMMS13} with a time-uniform mixture threshold; Theorem~\ref{thm:finite} re-derives its guarantee with explicit constants on the spectral alphabet.

\subsection{Three lemmas}

\begin{lemma}[Dual form and subgradient]\label{lem:dual}
Let $p\in\Delta_L$, $\mu:=\langle f,p\rangle<\mu^\star<f_{\max}$, $\bar\lambda:=1/(f_{\max}-\mu^\star)$. Then
\[
\Kinf(p)\;=\;\max_{\lambda\in[0,\bar\lambda]}\ \sum_{i=1}^L p_i\ln\big(1-\lambda(f_i-\mu^\star)\big),
\]
with $p_i\ln(\cdot)=0$ when $p_i=0$. The function $\Kinf$ is convex and lower-semicontinuous in $p$. Let $S=\operatorname{supp}p$ and choose a maximizer $\lambda^\star$. On the face $\Delta_S=\{p':\operatorname{supp}p'\subseteq S\}$, set
\[
g_i=\ln(1-\lambda^\star(f_i-\mu^\star)),\quad i\in S,
\qquad G(p)=\max_{i\in S}|g_i|.
\]
These coefficients are finite, including when $\lambda^\star=\bar\lambda$, and give the supporting inequality $\Kinf(p')\ge\Kinf(p)+\sum_{i\in S}g_i(p'_i-p_i)$ for $p'\in\Delta_S$. If $p$ has positive mass on a top-value atom, then $\lambda^\star<\bar\lambda$ and the subgradient is finite on the full simplex.
\end{lemma}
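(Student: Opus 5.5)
The plan is standard Lagrangian duality for the convex program \eqref{eq:kinf}, carried out in three steps.

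\emph{Step 1: the dual formula.} Minimize $\KL(p\|q)$ over $q\in\Delta_L$ subject to the single linear constraint $\sum_i f_iq_i\ge\mu^\star$. The objective is convex and the feasible set is convex. Slater's condition holds because $\mu^\star<f_{\max}$: any $q$ with enough mass on the top atom is strictly feasible and has finite divergence, exactly as in the proof of Theorem~\ref{thm:collapse}. So strong duality applies. We use a multiplier $\lambda\ge0$ for the mean constraint and $\eta$ for normalization, and minimize over $q_i\ge0$ coordinatewise.

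\begin{itemize}
\item For $i\in S$, stationarity gives $q_i=p_i/(\eta-\lambda f_i)$.
\item For $i\notin S$, the term $q_i(\eta-\lambda f_i)$ forces $\eta-\lambda f_i\ge0$, i.e.\ $\eta\ge\lambda f_{\max}$.
\end{itemize}

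Normalization of the dual optimum fixes $\eta=1+\lambda\mu^\star$. With this, $\eta-\lambda f_i=1-\lambda(f_i-\mu^\star)$, and the dual function reduces to $\sum_i p_i\ln(1-\lambda(f_i-\mu^\star))$. The positivity requirement $1-\lambda(f_{\max}-\mu^\star)\ge0$ is exactly $\lambda\le\bar\lambda$. This is the Honda--Takemura style computation, restricted to the known alphabet.

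A cleaner route to the same formula is a weak bound plus attainment.
\begin{itemize}
\item \emph{Weak bound.} For feasible $q$ and $\lambda\in[0,\bar\lambda]$, Jensen gives $\KL(p\|q)\ge\sum_i p_i\ln(1-\lambda(f_i-\mu^\star))$. Indeed, $\sum_i p_i\ln\frac{q_i(1-\lambda(f_i-\mu^\star))}{p_i}\le\ln\sum_i q_i(1-\lambda(f_i-\mu^\star))\le0$, using $\langle f,q\rangle\ge\mu^\star$.
\item \emph{Attainment.} Take a maximizer $\lambda^\star$. It exists because the dual objective is concave and upper semicontinuous on a compact interval, with value $-\infty$ allowed only at $\bar\lambda$ when a top atom carries mass. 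From the first-order conditions, build $q^\star_i=p_i/(1-\lambda^\star(f_i-\mu^\star))$ on $S$, and put any leftover mass on the top atom when $\lambda^\star=\bar\lambda$. Check that $q^\star$ is feasible and that equality holds.
\end{itemize}

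\emph{Step 2: convexity and lower semicontinuity.} Both follow from the dual form. For each fixed $\lambda$ the map $p\mapsto\sum_i p_i\ln(\cdots)$ is linear, and each coefficient is either finite or $-\infty$ only at $\lambda=\bar\lambda$ for the top atom. So $\Kinf$ is a supremum of affine functions, hence convex. It is also lower semicontinuous, provided we restrict to $\lambda<\bar\lambda$: there all the functions are finite, continuous and affine. The supremum over $[0,\bar\lambda)$ equals the maximum over $[0,\bar\lambda]$ by concavity and upper semicontinuity in $\lambda$.

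\emph{Step 3: the subgradient on the face $\Delta_S$.} First, finiteness of $g_i$ for $i\in S$ needs $1-\lambda^\star(f_i-\mu^\star)>0$. This can fail only if $\lambda^\star=\bar\lambda$ and $i$ is a top atom in $S$. That case is impossible, since then the objective would be $-\infty$, contradicting $\Kinf(p)<\infty$. Second, the supporting inequality holds because $\Kinf(p')\ge\sum_{i\in S}p'_ig_i$ (the dual evaluated at $\lambda^\star$), while $\Kinf(p)=\sum_{i\in S}p_ig_i$. Subtracting gives the claim. Third, if a top atom has $p_i>0$, then the objective tends to $-\infty$ as $\lambda\to\bar\lambda$, so $\lambda^\star<\bar\lambda$ and every $g_i$ for $i\le L$ is finite. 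The same argument then works on all of $\Delta_L$.

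\emph{Main obstacle.} I expect the boundary case $\lambda^\star=\bar\lambda$ with no top atom in $S$ to be the hardest. There the primal optimizer puts positive mass off the support of $p$, and attainment must be checked carefully through the complementary slackness $\sum_{i\in S}q^\star_i\le1$.
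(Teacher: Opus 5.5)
This is correct and is essentially the paper's proof: Slater and KKT for the dual formula, convexity and lower semicontinuity as a supremum of finite affine functions over $\lambda<\bar\lambda$, and the supporting inequality from evaluating the dual for $p'$ at $\lambda^\star$. Your Jensen-plus-attainment route is a self-contained alternative to invoking KKT, and the boundary case you flag does close: with $h(\lambda)=\sum_{i\in S}p_i\ln(1-\lambda(f_i-\mu^\star))$ one has $\sum_{i\in S}q^\star_i=1-\lambda^\star h'(\lambda^\star)$, so $h'(\bar\lambda)\ge0$ at an endpoint maximizer leaves mass $\bar\lambda h'(\bar\lambda)$ for the top atom and makes the mean exactly $\mu^\star$.

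There is one slip in Step 3: a maximizer with value $-\infty$ contradicts $h(\lambda^\star)\ge h(0)=0$, not $\Kinf(p)<\infty$.
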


\begin{IEEEproof}
Slater's condition holds since $\mu^\star<f_{\max}$. Use the Lagrangian
\[
\mathcal L(q,\lambda,\alpha)=\KL(p\|q)
+\alpha(\textstyle\sum_iq_i-1)
+\lambda(\mu^\star-\textstyle\sum_if_iq_i),
\]
where $\lambda\ge0$ and $q_i\ge0$. The KKT conditions give $(\alpha-\lambda f_i)q_i=p_i$ on $S$; off $S$, $\alpha-\lambda f_i\ge0$ and $(\alpha-\lambda f_i)q_i=0$. Since $\mu<\mu^\star$, the mean constraint binds and $\lambda>0$. Summing these identities gives $\alpha-\lambda\mu^\star=1$, hence
\[
q_i^\star=\frac{p_i}{1-\lambda^\star(f_i-\mu^\star)},\quad i\in S.
\]
Nonnegativity of the top-atom coefficient gives $\lambda^\star\le\bar\lambda$. Off-support mass is possible only at a top atom when equality holds. Substitution in the dual, using $\alpha=1+\lambda\mu^\star$, gives the stated formula. Convexity and lower semicontinuity follow by writing the value as the supremum of the finite affine objectives with $0\le\lambda<\bar\lambda$, including their endpoint limits. If a top atom has positive p-mass, the endpoint objective tends to $-\infty$, so a maximizer is interior. Otherwise, even at the endpoint all coefficients indexed by $S$ are finite. Evaluating the dual for $p'\in\Delta_S$ at the maximizer for $p$ gives the supporting inequality. This uses an affine supporting function, not an assertion that $\Kinf$ is affine on the face.
\end{IEEEproof}

The only property used downstream is the one-sided inequality $\Kinf(\hat q)\ge\Kinf(q_k)-G_k\|\hat q-q_k\|_1$ for empirical $\hat q$ of samples from $q_k$, with $G_k:=G(q_k)$.

\begin{lemma}[Time-uniform multinomial deviation]\label{lem:mixture}
Let $X_1,X_2,\dots$ be i.i.d.\ from $q\in\Delta_L$ and $\hat q_n$ the empirical distribution. For every $\delta\in(0,1)$,
\[
\mathbb P\Big(\exists n\ge1:\ n\,\KL(\hat q_n\|q)\ \ge\ \ln\tfrac1\delta+\tfrac{L-1}{2}\ln n+c_L\Big)\ \le\ \delta,
\]
where $c_L:=\sup_{n\ge1}\big(R_L(n)-\tfrac{L-1}{2}\ln n\big)<\infty$ is the (finite, classical \cite{KrichevskyTrofimov81,XieBarron00}) worst-case log-loss regret excess of the Krichevsky--Trofimov (Dirichlet-$\tfrac12$) mixture code over $n$ symbols on alphabet size $L$.
\end{lemma}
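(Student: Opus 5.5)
The plan is the standard mixture (method-of-mixtures / universal coding) argument, which turns a log-loss regret bound into a time-uniform deviation bound via Ville's maximal inequality.

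Let $\pi_{\mathrm{KT}}$ be the Dirichlet-$\tfrac12$ prior on $\Delta_L$. Define the mixture likelihood ratio
\[
W_n\;=\;\frac{\int\prod_{s\le n}r_{X_s}\,\pi_{\mathrm{KT}}(dr)}{\prod_{s\le n}q_{X_s}}
\;=\;\frac{Q_{\mathrm{KT}}(X_1^n)}{q^{\otimes n}(X_1^n)},\qquad W_0=1.
\]
Under $q^{\otimes\infty}$ each $\prod_s r_{X_s}/q_{X_s}$ is a nonnegative martingale with unit mean. By Fubini, $W_n$ is a nonnegative martingale with $\mathbb E W_n=1$. This needs $q_i>0$ on observed symbols, which holds almost surely; symbols outside $\operatorname{supp}q$ are never observed, so we can restrict to $\operatorname{supp}q$ without changing anything. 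Ville's inequality then gives
\[
\mathbb P\big(\exists n:\ W_n\ge 1/\delta\big)\le\delta.
\]

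Next, lower-bound $\ln W_n$ deterministically. The maximum-likelihood (empirical) law $\hat q_n$ satisfies
\[
\ln\frac{\hat q_n^{\otimes n}(X_1^n)}{q^{\otimes n}(X_1^n)}=n\,\KL(\hat q_n\|q).
\]
By definition of the worst-case regret $R_L(n)$ of the KT code relative to the best i.i.d. code in hindsight,
\[
-\ln Q_{\mathrm{KT}}(X_1^n)\le-\ln\hat q_n^{\otimes n}(X_1^n)+R_L(n)\le -\ln\hat q_n^{\otimes n}(X_1^n)+\tfrac{L-1}{2}\ln n+c_L.
\]
Subtracting the two displays gives $\ln W_n\ge n\KL(\hat q_n\|q)-\tfrac{L-1}{2}\ln n-c_L$. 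Hence the event in the lemma implies $W_n\ge 1/\delta$ for some $n$, and the probability bound follows.

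The main obstacle is the finiteness of $c_L$. This is exactly the classical result that the KT mixture's worst-case regret equals $\tfrac{L-1}{2}\ln n+O(1)$ uniformly in $n$ \cite{KrichevskyTrofimov81,XieBarron00}. We import it rather than reprove it. If a self-contained route were wanted, the fallback is to write $Q_{\mathrm{KT}}(x_1^n)=\prod_i\Gamma(n_i+\tfrac12)\,\Gamma(L/2)/\big(\Gamma(\tfrac12)^L\,\Gamma(n+L/2)\big)$ and compare it with $\prod_i(n_i/n)^{n_i}$ via Stirling bounds with explicit remainders. That comparison is uniform over count vectors, including zero counts, where $\Gamma(\tfrac12)$ terms appear and the bound only improves. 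One small point to check is the case $n$ with some $\hat q_{n,i}>0=q_i$: this is a null event, so it does not affect the probability statement.
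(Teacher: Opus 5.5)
Your argument is the paper's proof: the Dirichlet-$\tfrac12$ mixture likelihood ratio $W_n$, Ville's maximal inequality, and the pointwise bound $\ln W_n\ge n\,\KL(\hat q_n\|q)-R_L(n)$, which holds because $\hat q_n$ is the maximum-likelihood law. One correction concerns the case where $q$ lacks full support: almost every prior draw $r$ then puts mass off $\operatorname{supp}q$, so $\mathbb E_q[r_X/q_X]=\sum_{i\in\operatorname{supp}q}r_i<1$, and $W_n$ is only a nonnegative supermartingale (as the paper states), not a unit-mean martingale; this still suffices for Ville, whereas literally restricting to $\operatorname{supp}q$ would change the mixture and the alphabet size.
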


\begin{IEEEproof}
Let $W_n=\dfrac{\int_{\Delta_L}\prod_{j\le n}w_{X_j}\,d\pi_{1/2}(w)}{\prod_{j\le n}q_{X_j}}$ with $\pi_{1/2}=\mathrm{Dir}(\tfrac12,\dots,\tfrac12)$. Set $W_0=1$. Under $q$, $(W_n)$ is a nonnegative supermartingale: its conditional expected multiplier is the mixture predictive probability assigned to $\operatorname{supp}q$, which is at most one. It is a martingale when $q$ has full support. Thus Ville's inequality \cite{Ville39} gives $\mathbb P(\exists n:W_n\ge1/\delta)\le\delta$. Pointwise, $\ln W_n\ge n\,\KL(\hat q_n\|q)-R_L(n)$, since the maximum-likelihood weight vector is $\hat q_n$. The event in the statement thus implies $\ln W_n\ge\ln(1/\delta)$, and the claim follows with $R_L(n)\le\tfrac{L-1}{2}\ln n+c_L$.
\end{IEEEproof}

\begin{remark}[Exact constants for two and three atoms]
The worst-case KT redundancy is attained when all $n$ observations have the same symbol, giving
\[
R_L(n)=\ln\frac{\Gamma(n+L/2)\Gamma(1/2)}{\Gamma(L/2)\Gamma(n+1/2)}.
\]
Indeed, the count-dependent part is $\sum_i h(n_i)$, where $h(m)=m\ln m-\ln\Gamma(m+1/2)+\ln\Gamma(1/2)$ and $h(0)=0$. Its discrete increments increase: the map $m\mapsto h(m+1)-h(m)=(m+1)\ln(m+1)-m\ln m-\ln(m+1/2)$ is continuous on $[0,\infty)$ with value $\ln2$ at $m=0$ and has derivative $\ln((m+1)/m)-1/(m+1/2)>0$ on $(0,\infty)$, so it is increasing on $[0,\infty)$. Transferring counts from a smaller nonzero cell to a larger one therefore increases the sum, so a vertex maximizes it. For $L=3$, $R_3(n)=\ln(2n+1)$, whence $\sup_{n\ge1}(R_3(n)-\ln n)=\ln3$. For $L=2$, $A_n:=\exp(R_2(n))/\sqrt n$ satisfies $A_{n+1}/A_n=\sqrt{n(n+1)}/(n+1/2)<1$ and $A_1=2$, proving $c_2=\ln2$. These are uniform bounds, not conclusions from a finite numerical scan.
\end{remark}

\begin{lemma}[$L_1$ concentration \cite{WOSVW03}]\label{lem:l1}
For i.i.d.\ samples from $q\in\Delta_L$ and any $\varepsilon>0$: $\ \mathbb P(\|\hat q_n-q\|_1\ge\varepsilon)\le(2^L-2)\,e^{-n\varepsilon^2/2}$.
\end{lemma}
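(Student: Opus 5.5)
The plan is to reduce the $L_1$ deviation to one-sided deviations of subset probabilities and then take a union bound over subsets, each handled by Hoeffding's inequality. The key identity is that for any two distributions $\hat q,q\in\Delta_L$,
\[
\|\hat q-q\|_1\;=\;2\max_{A\subseteq\{1,\dots,L\}}\big(\hat q(A)-q(A)\big),
\]
where $\hat q(A)=\sum_{i\in A}\hat q_i$. The maximum is attained at $A^+=\{i:\hat q_i>q_i\}$. This holds because $\sum_i(\hat q_i-q_i)=0$, so the positive and negative parts of $\hat q-q$ have equal total mass, namely $\tfrac12\|\hat q-q\|_1$.

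First, I use the identity to rewrite the event. We have $\|\hat q_n-q\|_1\ge\varepsilon$ if and only if some $A$ satisfies $\hat q_n(A)-q(A)\ge\varepsilon/2$. Since $\varepsilon>0$, the subset $A$ can be neither $\emptyset$ nor the full alphabet, because for those two sets the difference is identically zero. So the event is contained in a union over the $2^L-2$ nontrivial subsets, and this is exactly where the combinatorial prefactor comes from.

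Second, I fix a nontrivial $A$. Then $\hat q_n(A)=\frac1n\sum_{j\le n}\mathbf 1\{X_j\in A\}$ is an average of $n$ i.i.d.\ Bernoulli$(q(A))$ variables with values in $[0,1]$. Hoeffding's inequality therefore gives
\[
\mathbb P\big(\hat q_n(A)-q(A)\ge\varepsilon/2\big)\le\exp\!\big(-2n(\varepsilon/2)^2\big)=e^{-n\varepsilon^2/2}.
\]
Summing this bound over the $2^L-2$ nontrivial subsets yields the claim.

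I do not expect any real obstacle. The only points needing care are the bookkeeping that excludes the two trivial subsets, which is what produces $2^L-2$ rather than $2^L$, and checking that the one-sided Hoeffding constant matches. A deviation of $\varepsilon/2$ gives exponent $2n\varepsilon^2/4=n\varepsilon^2/2$, as stated. No property specific to the spectral alphabet is used; the lemma is purely classical, as in \cite{WOSVW03}.
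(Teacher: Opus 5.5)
Your proof is correct. It is essentially the standard argument behind the cited result: the paper gives no proof of its own and imports the lemma from \cite{WOSVW03}, which likewise takes a union bound over the $2^L-2$ nontrivial subsets and bounds each one-sided binomial tail at deviation $\varepsilon/2$. That reference uses a refined, distribution-dependent Pinsker constant $\varphi\ge2$ in the exponent $n\varphi\varepsilon^2/4$, whereas your Hoeffding step uses the uniform value $2$, which gives exactly the constant stated in the lemma.
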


\subsection{The finite-time theorem}

\begin{theorem}[Finite-time regret bound]\label{thm:finite}
Under the model of Section~\ref{sec:model}, for every integer $T\ge3$, every suboptimal arm $k$ and every $\delta\in(0,\Kinf(q_k))$, with $G_k$ from Lemma~\ref{lem:dual} and $\varepsilon_k:=\delta/G_k$:
\[
\mathbb E[N_k(T)]\ \le\ n_\delta(T)\;+\;\frac{2^L-2}{e^{\varepsilon_k^2/2}-1}\;+\;3.
\]
With $a=(L-1)/2$ and $b_k=\Kinf(q_k)-\delta$, $n_\delta(T)$ is the smallest integer $n\ge\max\{1,\lceil a/b_k\rceil\}$ such that $nb_k\ge\ell(T,n)$. On this branch $nb_k-a\ln n$ is nondecreasing, so the inequality holds at every subsequent integer.
\end{theorem}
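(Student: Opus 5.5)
We follow the standard KL-UCB decomposition of \cite{CGMMS13}: a pull of the suboptimal arm $k$ at round $t$ means either the optimal arm's index is too low, or arm $k$'s index is too high. Since rounds before $t=\max\{K+1,3\}$ cost at most a constant, we write
\[
N_k(T)\le 3+\sum_{t}\mathbf 1\{U_{k^\star}(t)<\mu^\star\}+\sum_t\mathbf 1\{A_t=k,\ U_k(t)\ge\mu^\star\},
\]
and bound the two sums separately. The additive $3$ absorbs the initialization rounds together with the constant produced by the first sum.

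\emph{Optimal arm (underestimation).} On $\{U_{k^\star}(t)<\mu^\star\}$ the true law $q_{k^\star}$ is infeasible for the index program, because $\langle f,q_{k^\star}\rangle=\mu^\star$. Hence $n\,\KL(\hat q_{k^\star,n}\|q_{k^\star})>\ell(t,n)$ with $n=n_{k^\star}(t)$. We apply Lemma~\ref{lem:mixture} with $\delta_t=1/(t\ln^3 t)$, so that $\ln(1/\delta_t)=\ln t+3\ln\ln t$. Its time-uniform form handles the random $n$ with no union bound over $n$, which is exactly why $\ell$ carries the $\tfrac{L-1}{2}\ln n+c_L$ term. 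Each round then contributes at most $1/(t\ln^3t)$, and $\sum_{t\ge3}1/(t\ln^3t)$ is a small absolute constant. We need to verify numerically that this constant, added to the initialization cost, fits inside the $+3$.

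\emph{Suboptimal arm (overestimation).} Let $\varepsilon_k=\delta/G_k$. We claim that $U_k(t)\ge\mu^\star$ implies $n\,\Kinf(\hat q_{k,n})\le\ell(t,n)\le\ell(T,n)$. Indeed, if $U_k(t)\ge\mu^\star$, some $q$ with $\langle f,q\rangle\ge\mu^\star$ satisfies $n\KL(\hat q\|q)\le\ell$, and $\Kinf(\hat q)\le\KL(\hat q\|q)$. If moreover $\|\hat q_{k,n}-q_k\|_1<\varepsilon_k$, the one-sided bound from Lemma~\ref{lem:dual} gives
\[
\Kinf(\hat q_{k,n})\ge\Kinf(q_k)-G_k\varepsilon_k=b_k.
\]
This is legitimate because empirical laws lie on the face $\Delta_S$ with $S=\operatorname{supp}q_k$ almost surely, so the finite face subgradient applies even at the boundary case $\lambda^\star=\bar\lambda$. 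It follows that $nb_k\le\ell(T,n)$, which forces $n<n_\delta(T)$ by the monotonicity statement: $nb_k-a\ln n$ is nondecreasing once $n\ge a/b_k$. Since each pull of $k$ increments $n$, the number of such pulls with the good event is at most $n_\delta(T)$. The pulls on the bad event $\|\hat q_{k,n}-q_k\|_1\ge\varepsilon_k$ are counted once per value of $n$, giving
\[
\sum_{n\ge1}(2^L-2)e^{-n\varepsilon_k^2/2}=\frac{2^L-2}{e^{\varepsilon_k^2/2}-1}
\]
by Lemma~\ref{lem:l1}.

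The main obstacle is the bookkeeping in this second sum. We must index by the pull count $n$ rather than by $t$, so each $n$ is charged at most once. We must also use $\ell(t,n)\le\ell(T,n)$, which requires $\ell$ to be increasing in $t$ for $t\ge3$. Finally, we must confirm that the branch condition $n\ge\lceil a/b_k\rceil$ makes ``first $n$ with $nb_k\ge\ell(T,n)$'' a genuine threshold. That last point is a one-line derivative check, since $b_k-a/n\ge0$ there. Small $n$ below $a/b_k$ are already counted inside $n_\delta(T)$ because of the $\max$ in its definition.
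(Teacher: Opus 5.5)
Your proposal is correct and follows the paper's proof essentially step for step. It uses the same split $\mathbf 1\{A_t=k\}\le\mathbf 1\{U_{k^\star}(t)<\mu^\star\}+\mathbf 1\{A_t=k,\ U_k(t)\ge\mu^\star\}$, the time-uniform Krichevsky--Trofimov bound with $\delta_t=1/(t\ln^3t)$ for the first term, and per-pull-count charging with the face subgradient of Lemma~\ref{lem:dual} and the $L_1$ tail sum for the second. Three small repairs are needed. First, to conclude $n<n_\delta(T)$ you need the strict inequality $\Kinf(\hat q_{k,n})>b_k$, which you do have since $\|\hat q_{k,n}-q_k\|_1<\varepsilon_k$ and $G_k>0$; the non-strict $nb_k\le\ell(T,n)$ does not rule out equality at some $n\ge n_\delta(T)$. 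Second, the constant you left to check is $\sum_{t\ge3}\delta_t\le\frac{1}{3\ln^3 3}+\frac{1}{2\ln^2 3}<1$. Third, your display should carry the at most two pre-index pulls in front of the sums, not $3$; with that correction everything fits inside the $+3$.
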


\begin{IEEEproof}
Initialization and any action before $t=3$ account for at most $2$ pulls of a fixed arm. At subsequent rounds, since $A_t=k$ implies $U_k(t)\ge U_{k^\star}(t)$,
\[
\mathbf1\{A_t=k\}\ \le\ \mathbf1\{U_{k^\star}(t)<\mu^\star\}\ +\ \mathbf1\{A_t=k,\ U_k(t)\ge\mu^\star\}.
\]
\emph{Term (A).} $U_{k^\star}(t)<\mu^\star$ means $n_{k^\star}(t)\,\KL(\hat q_{k^\star}\|q_{k^\star})>\ell(t,n_{k^\star}(t))$; by Lemma~\ref{lem:mixture} with $\delta_t=1/(t\ln^3t)$ this has probability $\le\delta_t$, and $\sum_{t\ge3}\delta_t\le\frac{1}{3\ln^33}+\int_3^\infty\frac{dt}{t\ln^3t}<1$.

\emph{Term (B).} Rounds with $A_t=k$ and $n_k(t)=n$ occur at most once per value of $n$ (the count increments on the pull), so $\sum_{t}\mathbf1\{A_t=k,\ U_k(t)\ge\mu^\star\}\le 1+\sum_{n\ge1}\mathbf1\{\Kinf(\hat q_{k,n})\le\ell(T,n)/n\}$, with $\hat q_{k,n}$ the empirical distribution of the first $n$ spectral outcomes of arm $k$. If $U_k(t)\ge\mu^\star$ then $\Kinf(\hat q_k(t))\le\ell(t,n_k(t))/n_k(t)\le\ell(T,n_k(t))/n_k(t)$. By the definition of the crossing index, $\ell(T,n)/n\le\Kinf(q_k)-\delta$ for every $n\ge n_\delta(T)$, so the event forces $\Kinf(\hat q_{k,n})\le\Kinf(q_k)-\delta$, hence (Lemma~\ref{lem:dual}'s one-sided bound) $\|\hat q_{k,n}-q_k\|_1\ge\varepsilon_k$. The indicators with $n<n_\delta(T)$ contribute at most $n_\delta(T)-1$, and summing the tail bound of Lemma~\ref{lem:l1} over $n\ge n_\delta(T)$ gives $\mbox{$(2^L-2)/(e^{\varepsilon_k^2/2}-1)$}$. Adding the two initial pulls and Term~(A) yields the claim.
\end{IEEEproof}

\begin{corollary}[Closing Theorem~\ref{thm:match}]\label{cor:closeasymp}
For fixed $\delta$, the crossing index is $n_\delta(T)=(\ln T+O(\ln\ln T))/(\Kinf(q_k)-\delta)$. Taking $\delta=\delta_T\downarrow0$ slowly (e.g.\ $\delta_T=1/\ln\ln T$) in Theorem~\ref{thm:finite} gives $\limsup_T\mathbb E[N_k(T)]/\ln T\le1/\Kinf(k)$ for every suboptimal $k$, closing Theorem~\ref{thm:match} using only Lemmas~\ref{lem:dual}--\ref{lem:l1}, of which Lemmas~\ref{lem:mixture}--\ref{lem:l1} are the classical imports.
\end{corollary}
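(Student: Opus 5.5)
The plan is to feed the explicit finite-time bound of Theorem~\ref{thm:finite} into a limit, with $\delta=\delta_T\downarrow0$ chosen so that every non-leading term is $o(\ln T)$. I will treat the three terms of that bound separately: the constant $3$, the concentration tail $(2^L-2)/(e^{\varepsilon_k^2/2}-1)$, and the crossing index $n_\delta(T)$.

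\textbf{Step 1: the crossing index for fixed $\delta$.} Write $b=\Kinf(q_k)-\delta>0$ and $\ell(T,n)=\ln T+3\ln\ln T+a\ln n+c_L$. The index $n_\delta(T)$ is the smallest admissible $n$ with $nb\ge\ell(T,n)$.

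For the upper bound, I test the candidate $n_0=\lceil(\ln T+3\ln\ln T+c_L+a\ln(2\ln T/b))/b\rceil+\lceil a/b\rceil$. For large $T$ we have $n_0\le 2\ln T/b$, so $a\ln n_0\le a\ln(2\ln T/b)$, and hence $n_0b\ge\ell(T,n_0)$. The monotonicity remark in Theorem~\ref{thm:finite} guarantees the crossing persists at every later integer, so $n_\delta(T)\le n_0=(\ln T+O(\ln\ln T))/b$.

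For the lower bound, $n_\delta(T)\,b\ge\ln T$ holds trivially. This gives the asymptotic form stated in the corollary.

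\textbf{Step 2: letting $\delta$ depend on $T$.} I take $\delta_T=1/\ln\ln T$, which satisfies $\delta_T<\Kinf(q_k)$ for $T$ large.

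The tail term behaves like $(2^L-2)/(e^{\varepsilon^2/2}-1)\sim 2(2^L-2)G_k^2/\delta_T^2=O((\ln\ln T)^2)$. This is $o(\ln T)$, provided $G_k<\infty$, which Lemma~\ref{lem:dual} guarantees.

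The $O(\ln\ln T)$ error in $n_\delta$ must be checked for uniformity in $\delta$. In the explicit form of $n_0$, $b$ enters only through $\ln(2\ln T/b)$ and $a/b$. Since $b\ge\Kinf(q_k)/2$ eventually, these contributions are uniformly $O(\ln\ln T)$. Therefore
\[
\frac{\mathbb E[N_k(T)]}{\ln T}\le\frac{1+o(1)}{\Kinf(q_k)-\delta_T}+o(1)\to\frac{1}{\Kinf(k)}.
\]
Together with Theorem~\ref{thm:lbsc}, summing $\Delta_k\,\mathbb E[N_k(T)]$ over the suboptimal arms gives the regret limit of Theorem~\ref{thm:match}.

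\textbf{Main obstacle.} The step needing the most care is the uniformity in Step 2. The tail term blows up as $\delta\to0$, so the proof must show that the slowly vanishing choice of $\delta_T$ keeps it sub-logarithmic while the crossing index still converges to $\ln T/\Kinf$. A second point to confirm is that the face-restricted subgradient bound of Lemma~\ref{lem:dual} applies to empirical distributions. It does: the empirical $\hat q$ is supported in $\operatorname{supp}q_k$ almost surely, so the one-sided bound with finite $G_k$ holds even when $\lambda^\star=\bar\lambda$.
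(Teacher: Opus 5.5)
Your proposal is correct and follows the paper's route: substitute $\delta_T=1/\ln\ln T$ into Theorem~\ref{thm:finite}, use $n_{\delta}(T)=(\ln T+O(\ln\ln T))/(\Kinf(q_k)-\delta)$, and bound the tail term by $O(G_k^2/\delta_T^2)=O((\ln\ln T)^2)=o(\ln T)$. Your explicit candidate $n_0$ and the uniformity check over $b\in[\Kinf(q_k)/2,\Kinf(q_k))$ fill in details the paper leaves implicit; note that $n_\delta(T)\le n_0$ already follows from minimality, so the monotonicity remark is not needed there.
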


For projector rewards ($L=2$), the index optimization is Bernoulli KL-UCB; sharper analyses for the standard KL-UCB threshold are available in \cite{GarivierCappe11,CGMMS13}. Theorem~\ref{thm:finite}'s uniform statement covers general $F$. Section~\ref{sec:num} instantiates the constants and compares the theorem's envelope with simulation.

\section{Quantum memory: a converse, a cap, and an open problem}\label{sec:collective}

\begin{theorem}[Universal converse for quantum-memory policies]\label{thm:univ}
For every consistent $\pi\in\Pi_{\mathrm{all}}$ and suboptimal $k$,
\[
\liminf_{T\to\infty}\ \frac{\mathbb E[N_k(T)]}{\ln T}\;\ge\;\frac{1}{\Dinf(k)},
\]
and consequently $\liminf_T R(T)/\ln T\ge\sum_{k\ne k^\star}\Delta_k/\Dinf(k)$.
\end{theorem}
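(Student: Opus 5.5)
I will follow the template of Theorem~\ref{thm:lbsc}: a change of measure between the true instance $\nu$ and a confused instance $\nu'$ in which arm $k$ carries some $\sigma\in\operatorname{int}\Lambda_k$. The new ingredient is an upper bound on the divergence between the laws of the \emph{classical transcripts} of a quantum-memory policy by $\mathbb E_\nu[N_k(T)]\,D(\rho_k\|\sigma)$. This bound must hold even though copies may be stored, entangled with the register, and measured jointly.

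\emph{Step 1 (amortized potential).} For round $t$, let $\Omega_t$ and $\Omega'_t$ be the joint classical--quantum states of the record together with the quantum register after the round-$t$ instrument, under $\nu$ and $\nu'$. I take as potential $\Phi_t=D(\Omega_t\|\Omega'_t)$. Since the record is classical, $\Phi_t$ decomposes as $\KL$ of the record laws plus the record-averaged relative entropy of the conditional register states. At the end, data processing (partial trace over the register) gives $\KL(\mathbb P^T_\nu\|\mathbb P^T_{\nu'})\le\Phi_T$. Each round consists of three operations, and I will bound the increment of each.
\begin{enumerate}
\item The arm choice is a common classical stochastic map of the record, so $\Phi$ does not increase.
\item Appending a fresh copy $\rho_{A_t}$ (respectively $\rho'_{A_t}$) in tensor product, conditionally on $A_t$, adds exactly $\mathbf 1\{A_t=k\}D(\rho_k\|\sigma)$ by additivity of $D$ on tensor products. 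The copies of other arms agree between $\nu$ and $\nu'$ and add $0$.
\item Applying the same record-controlled instrument to the register and writing its classical outcome into the record is a common CPTP map on the cq state, so $\Phi$ does not increase by data processing.
\end{enumerate}
Telescoping from $\Phi_0=0$ gives $\KL(\mathbb P^T_\nu\|\mathbb P^T_{\nu'})\le\mathbb E_\nu[N_k(T)]\,D(\rho_k\|\sigma)$. The point is that $D$ is the right quantity because it is additive and monotone; $D_M$ is not additive under collective operations, which is why this converse uses $\Dinf$.

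\emph{Step 2 (lower bound and limit).} Exactly as in Theorem~\ref{thm:lbsc}, I will apply the fundamental inequality of \cite{GMS19} with $Z=N_k(T)/T$. Consistency on both $\nu$ and $\nu'$ applies, since arm $k$ is uniquely optimal in $\nu'$ because $\sigma\in\operatorname{int}\Lambda_k$. This yields $\KL\ge(1-o(1))\ln T$. It then remains to let $\sigma\to$ a minimizer of $D(\rho_k\|\cdot)$ over $\Lambda_k$ while staying in the interior. Because $\Lambda_k$ is compact and $D$ is lower-semicontinuous, a minimizer $\sigma_0$ exists, and $D(\rho_k\|\sigma_0)\le D(\rho_k\|\rho_{k^\star})$. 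If $D(\rho_k\|\sigma_0)=\infty$, the bound is trivial. I then take $\sigma_\varepsilon=(1-\varepsilon)\sigma_0+\varepsilon\tau$ with $\tau\succ0$ and $\Tr\tau F>\mu^\star$. Joint convexity of $D$ gives $D(\rho_k\|\sigma_\varepsilon)\le(1-\varepsilon)\Dinf(k)+\varepsilon D(\rho_k\|\tau)$, and letting $\varepsilon\downarrow0$ finishes. The regret statement follows from $R(T)=\sum_k\Delta_k\mathbb E[N_k(T)]$.

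\emph{Main obstacle.} The delicate point is step 1, specifically making the potential argument rigorous when the instruments are record-controlled, the register dimension grows with $t$, and the arm choice may be randomized. I would handle this by working throughout with the block-diagonal cq state $\sum_r|r\rangle\langle r|\otimes\omega_r$, so that every operation is a single CPTP map on it. I would also check that $\Phi_t$ stays finite, or else that the inequality is vacuous, when some fresh copy has infinite relative entropy. In that case the right-hand side is infinite only if $D(\rho_k\|\sigma)=\infty$, which the choice $\sigma_\varepsilon\succ0$ excludes.
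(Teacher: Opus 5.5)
Your proposal is correct and follows the paper's proof essentially step for step. It uses the same classical--quantum relative-entropy potential, non-increase under the common record-controlled instrument and arm choice, exact additivity when a fresh copy is appended, the \cite{GMS19} inequality with consistency on both instances, and the interior perturbation $\sigma_\varepsilon=(1-\varepsilon)\sigma_0+\varepsilon\tau$ with $\tau\succ0$. One small imprecision: $\Phi_t$ is a number, so the append step raises it by $\Pr_\nu[A_t=k]\,D(\rho_k\|\sigma)$, the record-average of your indicator $\mathbf 1\{A_t=k\}D(\rho_k\|\sigma)$, and this is what your telescoped bound $\mathbb E_\nu[N_k(T)]\,D(\rho_k\|\sigma)$ actually uses.
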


\begin{IEEEproof}
Fix $\sigma\in\operatorname{int}\Lambda_k$ and $\nu'$ as before. Model the protocol as acting on a classical--quantum register: classical record $X$ (all past outcomes and coins) and quantum memory $Q$ (all stored, unmeasured copies). Each round consists of (a) a record-controlled instrument on $Q$ appending its outcome to $X$, (b) an arm choice $A_t=a_t(X)$, and (c) appending a fresh copy of the pulled arm's state to $Q$; a final instrument produces the terminal record. Any physical strategy has this form: arm selection is a classical act, so any dependence on quantum data is mediated by a preceding instrument. Let $\bar\rho_t,\bar\sigma_t$ be the joint $cq$ states under $\nu,\nu'$ and $\Phi_t=D(\bar\rho_t\|\bar\sigma_t)$. Step (a) and the final step are channels applied identically under both models (the instrument may act jointly on stored copies of all arms; data processing needs only that the same channel is applied under both models), so $\Phi$ does not increase. For step (c), writing the $cq$ states as $\sum_xp(x)|x\rangle\langle x|\otimes\rho_x$ and $\sum_xq(x)|x\rangle\langle x|\otimes\sigma_x$ and using the classical--quantum decomposition $D(\bar\rho\|\bar\sigma)=\KL(p\|q)+\sum_xp(x)D(\rho_x\|\sigma_x)$ together with additivity of $D$ on product states,
\begin{align*}
\Phi_{t^+}-\Phi_t&=\textstyle\sum_xp(x)\,D\big(\rho_{a_t(x)}\big\|\sigma'_{a_t(x)}\big)\\
&=\Pr\nolimits_\nu[A_t{=}k]\;D(\rho_k\|\sigma),
\end{align*}
since $\sigma'_j=\rho_j$ for $j\ne k$. Summing from $\Phi_0=0$ (identical initial registers), $\Phi_T\le\mathbb E_\nu[N_k(T)]\,D(\rho_k\|\sigma)$. The terminal record is a classical function of the final $cq$ state, so $\KL(\mathbb P_\nu\|\mathbb P_{\nu'})\le\Phi_T$, and \cite{GMS19} plus consistency give $\ge(1-o(1))\ln T$ as before. Conclude with the same interior perturbation, using convexity of $D(\rho_k\|\cdot)$ and finiteness at a full-rank $\tau$.
\end{IEEEproof}

The potential-function style mirrors amortized converses in quantum channel discrimination \cite{BHKW20}; for fresh i.i.d.\ copies the append step is an exact additivity, so no amortization gap arises.

\begin{corollary}[Cap on the collective advantage]\label{cor:cap}
For each suboptimal $k$, the ratio between the forced exploration of an optimal single-copy policy and that permitted to \emph{any consistent} quantum-memory policy is at most $\Dinf(k)/\Kinf(k)$. If $[\rho_k,F]=0$ the classical tilt lies in $\Lambda_k$ and witnesses $\Dinf(k)=\Kinf(k)$: no collective advantage (Corollary~\ref{cor:iff} shows that the condition is also necessary). At the non-commuting instance of Section~\ref{sec:num} the cap is $1.346$.
\end{corollary}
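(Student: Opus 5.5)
Three assertions must be established: (a) the ratio of forced-exploration constants is at most $\Dinf(k)/\Kinf(k)$; (b) if $[\rho_k,F]=0$ then $\Dinf(k)=\Kinf(k)$; (c) the numerical cap $1.346$ at the Section~\ref{sec:num} instance.

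For (a), I would combine the two converses. By Theorem~\ref{thm:match}, the optimal single-copy policy has $\lim_T\mathbb E[N_k(T)]/\ln T=1/\Kinf(k)$. By Theorem~\ref{thm:univ}, every consistent policy in $\Pi_{\mathrm{all}}$ satisfies $\liminf_T\mathbb E[N_k(T)]/\ln T\ge1/\Dinf(k)$. Dividing the first constant by the smallest value the second can take gives the ratio bound $\Dinf(k)/\Kinf(k)$. This ratio is $\ge1$ by Proposition~\ref{prop:chain}.

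For (b), assume $[\rho_k,F]=0$. Then $\rho_k$ is block diagonal with respect to the spectral projectors, $\rho_k=\sum_iP_i\rho_kP_i$. I would take the classical tilt $\sigma=\sum_i(q^\star_i/p_i)\,P_i\rho_kP_i$ for $p_i>0$, and add $q^\star_iP_i/\operatorname{rk}P_i$ on atoms with $p_i=0$. Here $q^\star_i=p_i/(1-\lambda^\star(f_i-\mu^\star))$ comes from Lemma~\ref{lem:dual}. This is a state with $\Tr\sigma P_i=q^\star_i$, so $\Tr\sigma F=\mu^\star$ and $\sigma\in\Lambda_k$. It commutes with $\rho_k$ and agrees with $\rho_k$ up to the scalar factor $q^\star_i/p_i$ on each block. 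Hence $\ln\rho_k-\ln\sigma=\sum_i\ln(p_i/q^\star_i)P_i$ on $\operatorname{supp}\rho_k$, and
\[
D(\rho_k\|\sigma)=\sum_{i:p_i>0}p_i\ln\frac{p_i}{q^\star_i}=\Kinf(k).
\]
Therefore $\Dinf(k)\le\Kinf(k)$, and Proposition~\ref{prop:chain} supplies the reverse inequality. The only care needed is with atoms where $p_i=0$: the $\sigma$-mass placed there does not enter $D$, because $\rho_k$ has no support on those blocks. I would point forward to Corollary~\ref{cor:iff} for the necessity direction rather than prove it here.

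For (c), I would evaluate both quantities numerically at the qubit instance. $\Kinf$ is computed by the one-dimensional dual of Lemma~\ref{lem:dual}, or by $\kl(\mu,\mu^\star)$ if $F$ is a projector. $\Dinf$ is the convex minimization of $D(\rho_k\|\sigma)$ over the half-space; on a qubit this reduces to a search over the Bloch ball intersected with a plane, where the minimizer lies on the boundary plane. The main obstacle is (c), because it depends on data from Section~\ref{sec:num}. Parts (a) and (b) are direct.
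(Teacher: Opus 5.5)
Your proposal is correct and matches the paper's (implicit) argument: the ratio bound comes from combining the single-copy constant $1/\Kinf(k)$ of Theorem~\ref{thm:match} with the floor $1/\Dinf(k)$ of Theorem~\ref{thm:univ}. Your block-rescaled tilt is exactly the paper's witness $\sigma^\star$ in the commuting case, where $D\!\ln_\omega[\rho_k]=\omega^{-1}\rho_k$; it gives $D(\rho_k\|\sigma)=\Kinf(k)$, and Proposition~\ref{prop:chain} supplies the reverse inequality. For the number, the paper reads it off Table~\ref{tab:num}, with $\Dinf/\Kinf=1.117330/0.830366\approx1.346$.
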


\subsection{When does the cap equal one?}\label{sec:remark3}

Commutation of the arm with the reward observable is sufficient for the cap to equal one (Corollary~\ref{cor:cap}); we now show that it is also necessary. The first step characterizes equality through an optimal confusing state.

\begin{proposition}[Commuting optimal alternative]\label{prop:remark3}
$\Dinf(k)=\Kinf(k)$ if and only if there exists $\hat\sigma\in\Lambda_k$ that simultaneously (i) minimizes $D_M(\rho_k\|\cdot)$ over $\Lambda_k$ (i.e.\ achieves the value $\Kinf(k)$) and (ii) commutes with $\rho_k$. When this holds, $\hat\sigma$ also minimizes $D(\rho_k\|\cdot)$ over $\Lambda_k$.
\end{proposition}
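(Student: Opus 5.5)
We prove the two directions separately, using Theorem~\ref{thm:collapse} ($\Minf(k)=\Kinf(k)$), Lemma~\ref{lem:finite-equality}, and the pointwise bound $D_M\le D$ from Proposition~\ref{prop:chain}. Throughout, write $\rho=\rho_k$ and $\Lambda=\Lambda_k$.

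($\Leftarrow$) Let $\hat\sigma\in\Lambda$ satisfy (i) and (ii). By (i), $D_M(\rho\|\hat\sigma)=\Kinf(k)<\infty$. Since $\hat\sigma$ commutes with $\rho$, a common eigenbasis is a measurement attaining $D(\rho\|\hat\sigma)$. Hence $D(\rho\|\hat\sigma)=D_M(\rho\|\hat\sigma)$, and in particular $D(\rho\|\hat\sigma)$ is finite and $\operatorname{supp}\rho\subseteq\operatorname{supp}\hat\sigma$; this is the "conversely" half of Lemma~\ref{lem:finite-equality}. Therefore
\[
\Dinf(k)\le D(\rho\|\hat\sigma)=\Kinf(k)\le\Dinf(k),
\]
where the last inequality is Proposition~\ref{prop:chain}. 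So $\Dinf(k)=\Kinf(k)$. The displayed chain also shows that $\hat\sigma$ attains $\Dinf(k)$, which is the final assertion of the proposition.

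($\Rightarrow$) Assume $\Dinf(k)=\Kinf(k)$. First we show the infimum defining $\Dinf$ is attained. The set $\Lambda$ is compact, and $D(\rho\|\cdot)$ is lower semicontinuous, so a minimizer $\hat\sigma\in\Lambda$ exists. Its value is finite, because $\Dinf(k)\le D(\rho\|\rho_{k^\star})$, and $\Kinf(k)$ is finite by the construction in Theorem~\ref{thm:collapse}. Now compare the two divergences at $\hat\sigma$:
\[
\Kinf(k)=\Minf(k)\le D_M(\rho\|\hat\sigma)\le D(\rho\|\hat\sigma)=\Dinf(k)=\Kinf(k).
\]
All terms are therefore equal, which gives two conclusions. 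First, $D_M(\rho\|\hat\sigma)=\Kinf(k)$, so $\hat\sigma$ minimizes $D_M(\rho\|\cdot)$ over $\Lambda$; this is (i). Second, $D_M(\rho\|\hat\sigma)=D(\rho\|\hat\sigma)<\infty$, so Lemma~\ref{lem:finite-equality} yields $[\rho,\hat\sigma]=0$; this is (ii).

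The one step that needs care is that Lemma~\ref{lem:finite-equality} requires finite relative entropy. We guaranteed this by choosing $\hat\sigma$ as a minimizer of $D$ rather than of $D_M$. Had we instead taken the explicit minimizer $\sigma^\star$ of $D_M$ from Theorem~\ref{thm:collapse}, we would only know $D(\rho\|\sigma^\star)\ge\Dinf(k)$, not equality, and the lemma could not be applied. Existence of the minimizer follows from lower semicontinuity of the Umegaki relative entropy on a compact set.
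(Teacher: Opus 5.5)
Your proof is correct and follows the paper's argument essentially step for step. In ($\Leftarrow$) you use the chain $\Dinf(k)\le D(\rho_k\|\hat\sigma)=D_M(\rho_k\|\hat\sigma)=\Kinf(k)\le\Dinf(k)$. In ($\Rightarrow$) you take a minimizer of $D(\rho_k\|\cdot)$ rather than of $D_M$, which keeps the relative entropy finite, and then sandwich it so that Lemma~\ref{lem:finite-equality} applies.

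One small tidy-up: $D(\rho_k\|\rho_{k^\star})$ can be infinite when $\rho_{k^\star}$ is rank-deficient, so it does not by itself give finiteness. Rest finiteness on the hypothesis $\Dinf(k)=\Kinf(k)<\infty$, which you also cite, or on any full-rank $\tau\in\Lambda_k$.
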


\begin{IEEEproof}
($\Leftarrow$) If such $\hat\sigma$ exists: $D(\rho_k\|\hat\sigma)=D_M(\rho_k\|\hat\sigma)$ by the commuting equality condition, and $D_M(\rho_k\|\hat\sigma)=\Kinf(k)$ by hypothesis (i). Then $\Dinf(k)\le D(\rho_k\|\hat\sigma)=\Kinf(k)\le\Dinf(k)$ (the last inequality is Proposition~\ref{prop:chain}), forcing equality throughout.

($\Rightarrow$) Suppose $\Dinf(k)=\Kinf(k)$. Let $\hat\sigma$ be any minimizer of $D(\rho_k\|\cdot)$ over $\Lambda_k$ (exists by compactness and lower semicontinuity). Then
\[
\Kinf(k)\ \le\ D_M(\rho_k\|\hat\sigma)\ \le\ D(\rho_k\|\hat\sigma)=\Dinf(k)=\Kinf(k),
\]
the first inequality since $\hat\sigma\in\Lambda_k$ and $\Kinf(k)=\Minf(k)$ is an infimum over $\Lambda_k$, the second by data processing. The sandwich forces $D_M(\rho_k\|\hat\sigma)=\Kinf(k)$ [proving (i)] and $D_M(\rho_k\|\hat\sigma)=D(\rho_k\|\hat\sigma)$, which by the equality condition gives $[\rho_k,\hat\sigma]=0$ [proving (ii)].
\end{IEEEproof}

Proposition~\ref{prop:remark3} concerns the \emph{optimal confusing state}, not $F$: the constraint $\Tr\hat\sigma F\ge\mu^\star$ only sees $\hat\sigma$'s diagonal entries against $F$, so a priori $\hat\sigma$ might commute with $\rho_k$ by coincidence while $F$ does not. The next corollary rules this out.

\begin{corollary}[The cap equals one iff the arm commutes with the reward]\label{cor:iff}
$\Dinf(k)=\Kinf(k)$ if and only if $[\rho_k,F]=0$.
\end{corollary}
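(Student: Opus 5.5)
The sufficiency direction ($[\rho_k,F]=0\Rightarrow\Dinf(k)=\Kinf(k)$) is Corollary~\ref{cor:cap}. I would therefore concentrate on necessity and run it through Proposition~\ref{prop:remark3}. Assume $\Dinf(k)=\Kinf(k)$. Write $\rho=\rho_k$. Then Proposition~\ref{prop:remark3} supplies $\hat\sigma\in\Lambda_k$ with three properties: $[\rho,\hat\sigma]=0$, $\hat\sigma$ minimizes $D(\rho\|\cdot)$ over $\Lambda_k$, and $D(\rho\|\hat\sigma)<\infty$, so that $\operatorname{supp}\rho\subseteq\operatorname{supp}\hat\sigma=:V$. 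The plan is to write down the first-order optimality (KKT) conditions for the convex program $\min\{D(\rho\|\sigma):\sigma\succeq0,\ \Tr\sigma=1,\ \Tr\sigma F\ge\mu^\star\}$ at $\hat\sigma$, and then read off commutation of $F$ with $\rho$.

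\emph{Step 1 (KKT).} The objective is convex in $\sigma$, and Slater's condition holds: a full-rank $\tau$ with $\Tr\tau F>\mu^\star$ exists, as in the proof of Theorem~\ref{thm:lbsc}. The directional derivative of $\sigma\mapsto-\Tr\rho\ln\sigma$ at $\hat\sigma$ is $-\Tr(D\!\ln_{\hat\sigma}[\rho]\,X)$, where $D\!\ln$ is taken on $V$; this is legitimate because $\rho$ lives on $V$. Lagrangian duality then gives multipliers $\alpha\in\mathbb R$ and $\lambda\ge0$ and the operator inequality
\[
Z:=\alpha I-\lambda F-D\!\ln_{\hat\sigma}[\rho]\oplus 0_{V^\perp}\ \succeq\ 0,
\qquad \Tr(Z\hat\sigma)=0 .
\]
Since $\hat\sigma\succ0$ on $V$, complementary slackness forces $\Pi_VZ\Pi_V=0$. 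Because $\mu<\mu^\star$, the constraint binds and $\lambda>0$. This mirrors Lemma~\ref{lem:dual}: if $\lambda=0$, then $\hat\sigma$ would be an unconstrained minimizer, i.e.\ $\hat\sigma=\rho\notin\Lambda_k$.

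\emph{Step 2 (blocks).} A positive semidefinite matrix whose diagonal block on $V$ vanishes must also have vanishing off-diagonal blocks. On $V^\perp$ the operator $D\!\ln_{\hat\sigma}[\rho]$ is zero, so $Z$ vanishing on the $V$--$V^\perp$ block gives $\Pi_VF\Pi_{V^\perp}=0$. Hence $F$ is block diagonal with respect to $V\oplus V^\perp$. On $V$, the operators $\rho$ and $\hat\sigma$ commute, so $D\!\ln_{\hat\sigma}[\rho]=\rho\hat\sigma^{-1}$, and the vanishing diagonal block yields
\[
\Pi_VF\Pi_V=\lambda^{-1}\big(\alpha\Pi_V-\rho\hat\sigma^{-1}\big).
\]
This is a function of operators commuting with $\rho$, so $[\rho,\Pi_VF\Pi_V]=0$. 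Combining this with block-diagonality and $\rho=\Pi_V\rho\Pi_V$ gives $[\rho,F]=0$.

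\emph{Main obstacle.} I expect the delicate step to be Step~1 when $\hat\sigma$ is singular, since $D(\rho\|\cdot)$ is not differentiable at the boundary of the PSD cone. I would avoid differentiability entirely: compare $\hat\sigma$ with $\sigma_\eta=(1-\eta)\hat\sigma+\eta\tau$ for an arbitrary state $\tau$. By operator monotonicity of $\ln$, $\ln\sigma_\eta\succeq\ln((1-\eta)\hat\sigma)$ on $V$, which bounds $D(\rho\|\sigma_\eta)$ at first order in $\eta$. Optimality of $\hat\sigma$ along all such feasible, or Lagrangian-relaxed, directions then gives $\Tr(Z\tau)\ge0$ for every state $\tau$, which is exactly $Z\succeq0$. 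If this derivative bookkeeping turns out awkward, I would fall back on the explicit form of the collapse instead. In that route, one shows that a commuting minimizer attaining $\Kinf$ must coincide with $\sigma^\star$ of \eqref{eq:witness} on $V_+$, via strict concavity of the BFT objective \eqref{eq:bft}. Then $[\rho,D\!\ln_\omega[\rho]]=0$ with $\omega$ a function of $F$, and the Kubo--Mori rescaling of Remark~\ref{rem:witness} forces the coherences $\rho_{ab}$ between distinct $F$-eigenspaces to vanish.
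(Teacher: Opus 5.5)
Your argument is correct in outline and takes a genuinely different route from the paper. The paper never differentiates anything. It fixes a common eigenbasis of $\rho$ and $\hat\sigma$ and views $W(i|j)=\langle e_j|P_i|e_j\rangle$ as a classical channel. It then reads $D(\rho\|\hat\sigma)=\KL(p\|q)$ as equality in classical data processing. The log-sum equality condition ($r_j/s_j=p_i/q^\star_i$ whenever $r_jW(i|j)>0$) is combined with the fact from Lemma~\ref{lem:dual} that the ratios $p_i/q^\star_i=1-\lambda^\star(f_i-\mu^\star)$ are distinct across atoms. This puts every eigenvector of $\rho$ inside a single eigenspace of $F$.

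You instead use first-order stationarity of the Umegaki projection onto $\Lambda_k$. Granting $Z\succeq0$ and $\Tr(Z\hat\sigma)=0$, your Step~2 is right. $Z\Pi_V=0$ together with $\lambda>0$ gives $\Pi_{V^\perp}F\Pi_V=0$. Commutation on $V$ gives $D\!\ln_{\hat\sigma}[\rho]=\rho\hat\sigma^{-1}$, so $\Pi_VF\Pi_V$ commutes with $\rho$. Your route needs neither the tilt form of $q^\star$ nor distinct ratios. It also proves a more general fact: if a minimizer of $D(\rho\|\cdot)$ over a half-space $\{\Tr\sigma F\ge c\}$ not containing $\rho$ commutes with $\rho$, then $[\rho,F]=0$. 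The paper's route buys a purely classical argument with no calculus at the boundary of the positive cone.

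That boundary calculus is the one real gap, and the fix you sketch does not close it. Singular $\hat\sigma$ does occur under the hypothesis. For $F=\operatorname{diag}(1,\tfrac12,0)$, $\rho=|e_2\rangle\langle e_2|$ and $\mu^\star=0.7$, the unique minimizer is $0.4|e_1\rangle\langle e_1|+0.6|e_2\rangle\langle e_2|$. The $V^\perp$--$V$ block of $Z$ is what yields $\Pi_{V^\perp}F\Pi_V=0$, and it can only come from directions leaving $V$.

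\begin{itemize}
\item The bound $\sigma_\eta\succeq(1-\eta)\hat\sigma$ gives only $D(\rho\|\sigma_\eta)\le D(\rho\|\hat\sigma)-\ln(1-\eta)$. This caps the right derivative but does not identify it.
\item A block-diagonal minorant via the Schur complement replaces $\tau_{VV}$ by $\tau_{VV}-\tau_{VV^\perp}\tau_{V^\perp V^\perp}^{+}\tau_{V^\perp V}$ (pseudo-inverse). It therefore loses a first-order term that competes with exactly the off-diagonal gain $2\,\mathrm{Re}\Tr(F_{V^\perp V}\tau_{VV^\perp})$ you want to exploit.
\end{itemize}

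What you need is the exact statement. With $\hat\sigma=A\oplus0$ and $X=\tau-\hat\sigma$, the right derivative of $\eta\mapsto D(\rho\|\hat\sigma+\eta X)$ at $0$ is $-\Tr(D\!\ln_A[\rho]\,X_{VV})$. This follows from $\ln x=\int_0^\infty(\frac{1}{1+s}-\frac{1}{x+s})\,ds$ and the Schur complement $[(\hat\sigma+\eta X+s)^{-1}]_{VV}=(A+s+\eta X_{VV}-E_{\eta,s})^{-1}$, where $E_{\eta,s}=\eta^2\tau_{VV^\perp}(\eta\tau_{V^\perp V^\perp}+s)^{-1}\tau_{V^\perp V}$. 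Positivity of $\tau$ gives $\operatorname{ran}\tau_{V^\perp V}\subseteq\operatorname{ran}\tau_{V^\perp V^\perp}$. Hence $\|E_{\eta,s}\|\le\eta^2\|\tau_{VV^\perp}\|^2/(c\eta+s)$, with $c$ the least nonzero eigenvalue of $\tau_{V^\perp V^\perp}$. This integrates against $(\lambda_{\min}(A)/2+s)^{-2}$ to $O(\eta^2\ln(1/\eta))=o(\eta)$.

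With that lemma, optimality of $\hat\sigma$ gives $\Tr(G\tau)\le\Tr(G\hat\sigma)=1$ for all $\tau\in\Lambda_k$, where $G=D\!\ln_A[\rho]\oplus0$. SDP duality for this linear program (Slater holds) then produces your $Z$.

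Your fallback does not avoid the work either. It would still need a proof that $[\rho,D\!\ln_\omega[\rho]]=0$ forces $[\rho,\omega]=0$. It would then need the distinctness of the eigenvalues $p_i/q^\star_i$ of $\omega$ across eigenspaces of $F$, which is precisely the paper's key fact.
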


\begin{IEEEproof}
($\Leftarrow$) is Corollary~\ref{cor:cap}. ($\Rightarrow$) Write $\rho=\rho_k$, $p_i=\Tr\rho P_i$ and $S=\{i:p_i>0\}$; let $q^\star$ optimize \eqref{eq:kinf}. By Lemma~\ref{lem:dual} the constraint binds with multiplier $\lambda^\star>0$ and $p_i/q^\star_i=1-\lambda^\star(f_i-\mu^\star)$ for $i\in S$, so these ratios are distinct across $i\in S$ because the $f_i$ are distinct. By Proposition~\ref{prop:remark3} there is $\hat\sigma\in\Lambda_k$ with $[\rho,\hat\sigma]=0$ and $D(\rho\|\hat\sigma)=D_M(\rho\|\hat\sigma)=\Kinf(k)<\infty$. Take a common eigenbasis $\{e_j\}$, write $r_j=\langle e_j|\rho|e_j\rangle$, $s_j=\langle e_j|\hat\sigma|e_j\rangle$, and let $W(i|j)=\langle e_j|P_i|e_j\rangle$, a stochastic matrix; then $p=rW$, the spectral statistics of $\hat\sigma$ are $q:=sW$, and $D(\rho\|\hat\sigma)=\KL(r\|s)$, so $s_j>0$ whenever $r_j>0$.

(a) $q$ is feasible for \eqref{eq:kinf} and $\Kinf(k)\le\KL(p\|q)\le D_M(\rho\|\hat\sigma)=\Kinf(k)$, so $q$ is optimal. Since $\KL(p\|\cdot)$ is strictly convex in the coordinates indexed by $S$ and the feasible set is convex, $q_i=q^\star_i$ for all $i\in S$.

(b) $\KL(r\|s)=\KL(p\|q)$ is equality in classical data processing through $W$. Writing
\[
\KL(r\|s)=\sum_i\sum_j r_jW(i|j)\ln\frac{r_jW(i|j)}{s_jW(i|j)}
\]
and applying the log-sum inequality to each inner sum gives $\KL(r\|s)\ge\sum_ip_i\ln(p_i/q_i)=\KL(p\|q)$, with equality only if, for every $i$, the ratio $r_j/s_j$ equals $p_i/q_i$ for all $j$ with $r_jW(i|j)>0$.

(c) Fix $j$ with $r_j>0$. Every $i$ with $W(i|j)>0$ satisfies $p_i\ge r_jW(i|j)>0$, hence $i\in S$ and, by (a)--(b), $r_j/s_j=p_i/q^\star_i$. These ratios are distinct across $i\in S$, so exactly one $i$ has $W(i|j)>0$; then $W(i|j)=1$ and $e_j\in\operatorname{ran}P_i$. Therefore $\rho=\sum_{j:r_j>0}r_j|e_j\rangle\langle e_j|$ is a sum of operators each supported in a single eigenspace of $F$, so $[\rho,P_i]=0$ for all $i$, i.e.\ $[\rho,F]=0$.
\end{IEEEproof}

Corollary~\ref{cor:iff} closes the question of where the collective bracket $[1/\Dinf(k),1/\Kinf(k)]$ is degenerate: exactly at arms commuting with the reward observable. At every non-commuting arm the two floors differ, and whether the lower one is attainable is Open Problem~\ref{op:collective}.

\begin{openproblem}\label{op:collective}
Determine, per arm, the optimal $\ln T$ coefficient $c_k:=\inf_\pi\liminf_{T}\mathbb E_\pi[N_k(T)]/\ln T$, the infimum over consistent $\pi\in\Pi_{\mathrm{all}}$. Theorems~\ref{thm:match} and \ref{thm:univ} bracket it in $[\,1/\Dinf(k),\,1/\Kinf(k)\,]$. The certification core is a composite quantum Stein problem: testing the simple null $\rho_k^{\otimes n}$ against the composite i.i.d.\ alternative $\{\sigma^{\otimes n}:\sigma\in\Lambda_k\}$, with the exponent measured on the type-II side. General regularized formulas for this composite i.i.d. testing task are known \cite{BBH21,LamiSanov25}; in particular, Theorem~3 of \cite{LamiSanov25} covers the singleton null and closed confusing set here. The unresolved task is to evaluate the resulting exponent for the reward half-space, rather than to establish the existence of a composite testing theorem. Single-letter formulas fail in general \cite{MSW22}. The family $\{\sigma^{\otimes n}:\sigma\in\Lambda_k\}$ is not tensor-closed, so the generalized quantum Stein lemma for tensor-closed families \cite{Lami25,HY25} does not directly settle its value; see also \cite{Tangled24}. Whether the exponent equals $\Dinf(k)$ here --- and whether an optimal \emph{adaptive} regret protocol can realize that testing rate --- are separate open questions. Even a strict separation $c_k>1/\Dinf(k)$ at some instance would be significant: it would show the universal floor of Theorem~\ref{thm:univ} is not the right constant. Section~\ref{sec:n2} reports two-copy searches; whether a finite block size yields a collective advantage remains open.
\end{openproblem}

\section{Two-copy measurements: numerical evidence}\label{sec:n2}

We investigate whether jointly measuring pairs of copies can improve on the spectral measurement at the running qubit instance (defined in Section~\ref{sec:num}). The relevant optimizations are
\[
K^{(2)}(k):=\tfrac12\inf_{\sigma\in\Lambda_k}
D_M(\rho_k^{\otimes2}\|\sigma^{\otimes2})
\]
and
\[
V_2(k):=\sup_M\inf_{\sigma\in\Lambda_k}\tfrac12
\KL(P^M_{\rho_k^{\otimes2}}\|P^M_{\sigma^{\otimes2}}).
\]
The inf-sup envelope $K^{(2)}$ upper-bounds $V_2$; this inequality does not establish achievability or strict separation. Numerical optimization estimates $K^{(2)}=0.8718$, about $5\%$ above $\Kinf=0.8304$. The numerical minimizer lies in the plane spanned by the Bloch vector of $\rho_k$ and the reward axis ($b_y=0$); this observation is not a global optimization certificate.

\begin{remark}[Block hierarchy]\label{rem:blocks}
For block size $n$ let $V_n(k):=\sup_M\inf_{\sigma\in\Lambda_k}\frac1n\KL(P^M_{\rho_k^{\otimes n}}\|P^M_{\sigma^{\otimes n}})$, the supremum over $n$-copy POVMs, and $K^{(n)}(k):=\frac1n\inf_{\sigma\in\Lambda_k}D_M(\rho_k^{\otimes n}\|\sigma^{\otimes n})$. Then
\[
\Kinf(k)\ \le\ V_n(k)\ \le\ V_{2n}(k)\ \le\ K^{(2n)}(k)\ \le\ \Dinf(k).
\]
The first inequality uses the product-spectral measurement: its outcome law under $\sigma^{\otimes n}$ is $q(\sigma)^{\otimes n}$, so its per-copy divergence is $\KL(p^{(k)}\|q(\sigma))\ge\Kinf(k)$ for every $\sigma\in\Lambda_k$. The second uses the product $M\otimes M$ of an $n$-copy POVM with itself, whose per-copy divergence equals that of $M$ for every $\sigma$. The third is weak duality ($\sup\inf\le\inf\sup$, and $\KL(P^M_\cdot\|P^M_\cdot)\le D_M$). The last is $D_M\le D$ with additivity of $D$ on tensor powers. Nothing here is a regret statement for block-$n$ policies, and monotonicity $V_n\le V_{n+1}$ is not claimed.
\end{remark}

(i) \emph{An analytic lower bound.} Product-spectral measurement attains exactly $\Kinf$, hence $V_2\ge\Kinf$. For the projector reward used here, its distribution under $\sigma$ depends only on $\Tr\sigma F$. The two copies give independent Bernoulli outcomes, and minimizing their per-copy divergence over the cap gives $\kl(\mu_k,\mu^\star)=\Kinf$.

(ii) \emph{A transverse alternative defeats a promising candidate.} We numerically minimized $D_M(\rho_k^{\otimes2}\|\tau)$ over mixtures of tensor powers of confusing states in the symmetry plane ($b_y=0$ atoms) and took the eigenbasis of the resulting variational maximizer in \eqref{eq:bft} as a candidate measurement. Restricted to that two-parameter slice of alternatives, the inner search returns a value about $4\%$ above $\Kinf$. This is an estimated slice minimum, not a guaranteed full-set exponent. Over the full cap, the search finds an alternative near Bloch coordinates $(0,\pm0.40,0.90)$ with value $0.7355<\Kinf$. Such a feasible alternative upper-bounds the candidate's worst-case value and rules out an improvement by that candidate. Two implementations agree to $10^{-9}$ on the reported numerical value.

For $D(\rho\|\sigma)$ and $D_M(\rho\|\sigma)$, symmetry together with convexity permits restriction to the symmetry plane. For a fixed two-copy measurement, however, the probabilities depend on $\sigma$ through the nonlinear map $\sigma\mapsto\sigma^{\otimes2}$. That convexity argument does not apply. The transverse alternative illustrates the resulting risk of restricting the adversary to a slice.

(iii) \emph{Scope of the searches.} Beyond the product-spectral basis ($0.8304$) and the candidate of (ii), the tested measurements include the Bell basis ($0.345$), the symmetric/antisymmetric two-outcome measurement ($0$), and a seeded local max--min search over all two-copy projective bases $U\in U(4)$ ($16$ real parameters; Nelder--Mead followed by Powell; nine starts including the product basis), whose best worst-case value is $0.830366=\Kinf$ to $10^{-11}$, attained at the product-spectral basis. Grid or locally minimized adversarial values are upper bounds on the full-set infimum for each fixed basis. No tested basis improves on the analytic spectral value. The search does not cover all POVMs or certify the global maximum even over projective bases.

These observations support the conjecture $V_2=\Kinf$ at this instance, but do not prove it or a strict inf-sup/sup-inf gap. They also leave other fixed block sizes unresolved. Even a proof at two copies would not exclude an advantage at three or more copies.

\begin{remark}[Relation to the asymptotic problem]
For each fixed alternative $\sigma$, quantum Stein asymptotics give $n^{-1}D_M(\rho^{\otimes n}\|\sigma^{\otimes n})\to D(\rho\|\sigma)$ \cite{HiaiPetz91,OgawaNagaoka00}. This pointwise statement alone does not interchange the limit with optimization over alternatives or supply a single measurement for the whole confusing set. Composite testing and its reduction to an adaptive regret guarantee require separate arguments. The present experiments do not locate a possible collective advantage exclusively in the unbounded-block regime.
\end{remark}

\section{Numerical instantiation}\label{sec:num}

Take $F=|0\rangle\langle0|$ on a qubit; arm $k^\star$ with Bloch vector $(0,0,0.9)$, $\mu^\star=0.95$; arm $k$ with Bloch vector $(0.9,0,0)$, $\mu_k=0.5$, $\Delta_k=0.45$; $\Lambda_k=\{b_z\ge0.9,|b|\le1\}$. All values below are computed by seeded, two-route-verified scripts (nested minimization; independent variational and direct-POVM evaluations of $D_M$ agreeing to $\sim\!10^{-13}$).

\begin{table}[t]
\centering
\caption{Computed exponents at the running instance (nats).}
\label{tab:num}
\begin{tabular}{lc}
\toprule
$\kl(\mu_k,\mu^\star)=\Kinf(k)$ & $0.830366$\\
$C^*(k)$ (numeric; best deterministic POVM) & $0.830366$\\
$\Minf(k)$ (numeric, nested) & $0.830366$\\
$\Dinf(k)$ & $1.117330$\\
$K^{(2)}(k)$ (two-copy envelope, est.; Section~\ref{sec:n2}) & $0.871830$\\
Product-spectral (two copies; $V_2$ lower bound) & $0.830366$\\
pairwise $D_M(\rho_k\|\rho_{k^\star})$,\ $D(\rho_k\|\rho_{k^\star})$ & $1.059352$,\ $1.324998$\\
\bottomrule
\end{tabular}
\end{table}

\begin{figure}[t]
\centering
\includegraphics[width=\columnwidth]{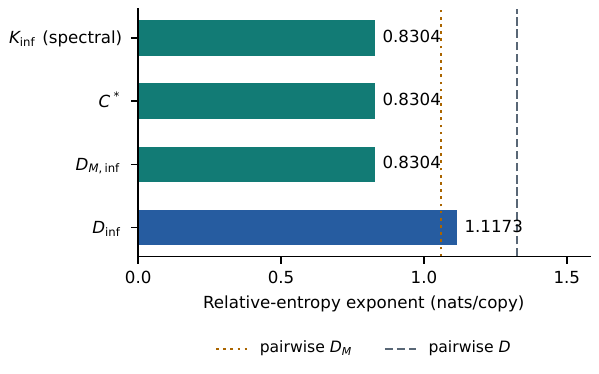}
\caption{The exponent ladder at the running instance. The three single-copy quantities coincide exactly (Theorem~\ref{thm:collapse}); the quantum-memory floor $\Dinf$ sits strictly above, with the pairwise divergences (dotted) shown for reference --- using them in place of the half-space infima would understate both floors.}
\label{fig:ladder}
\end{figure}

Three features of the collapse are worth recording. First, the nested minimizer of $D_M$ over $\Lambda_k$ landed at Bloch coordinates $(0.27971,0.90000)$; the analytic witness \eqref{eq:witness} gives $b_x^\star=2\rho_{01}\ln(\omega_1/\omega_0)/(\omega_1-\omega_0)=0.2797217$ with $\omega=\operatorname{diag}(10/19,10)$, and $D_M(\rho_k\|\sigma^\star)$ matches $\kl(\mu_k,\mu^\star)$ to $4\times10^{-12}$, with the numerical maximizer of \eqref{eq:bft} equal to $\operatorname{diag}(0.526315,10.000038)$ --- the theorem was, in effect, discovered by this coincidence and then proved. Second, the deterministic measurement game exhibits total mimicability of tilted measurements: the projective measurement at angle $\pi/4$ has game value $C^*_M=1.7\times10^{-13}$ against $\Lambda_k$, which is why the collapse $C^*=\Minf$ is not a priori obvious. Third, the regret slopes: the single-copy optimum $\Delta_k/\Kinf=0.5419$ is the constant that Bernoulli KL-UCB on the $F$-outcomes approaches (Fig.~\ref{fig:regret}), while the quantum-memory floor is $\Delta_k/\Dinf=0.4027$; quantum memory can save at most $25.7\%$ of the exploration at this instance --- if the floor is attainable at all (Open Problem~\ref{op:collective}; two-copy evidence in Section~\ref{sec:n2}).

\begin{figure}[t]
\centering
\includegraphics[width=\columnwidth]{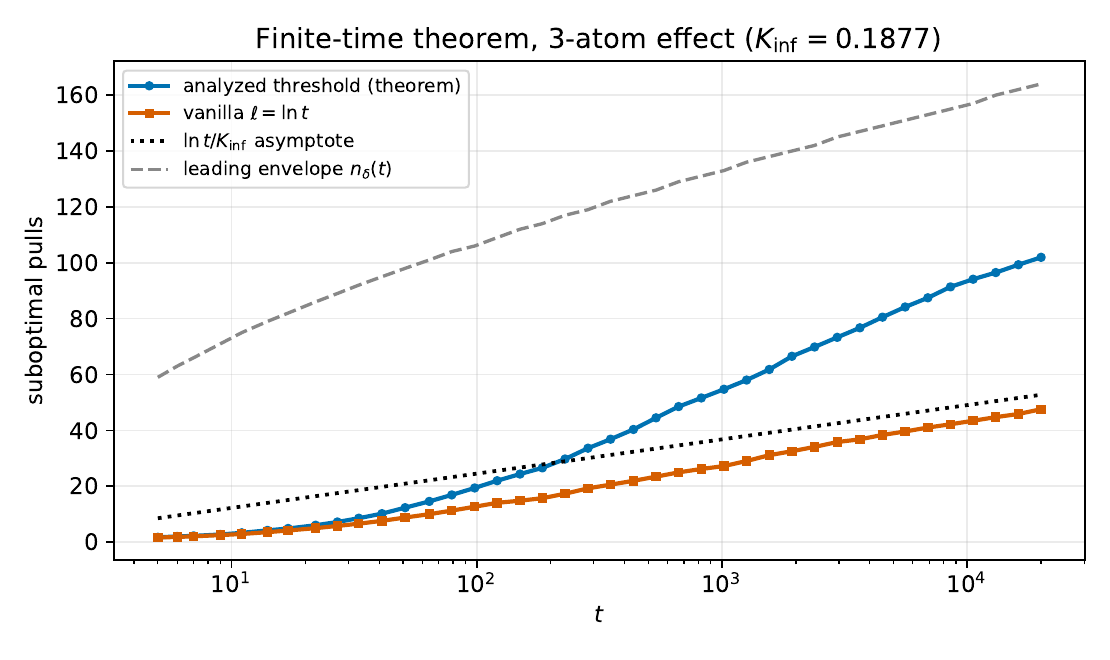}
\caption{Finite-time validation of Theorem~\ref{thm:finite} on a $3$-atom effect $f=(1,0.5,0)$, $q_{k^\star}=(0.55,0.30,0.15)$, $q_k=(0.30,0.30,0.40)$ ($\mu^\star=0.70$, $\mu_k=0.45$, $\Kinf(k)=0.187744$, exceeding $\kl(\mu_k,\mu^\star)=0.1346$ by $40\%$ --- the value of the full alphabet over the Bernoulli reduction). In a 100-run simulation (seed $7$, horizon $T=2\times10^4$), the empirical mean count stays below the leading envelope $n_\delta(t)$ (dashed) at every checkpoint under the analyzed threshold (solid); the lighter, unanalyzed threshold $\ell=\ln t$ climbs toward the predicted constant $1/\Kinf=5.33$ from below, illustrating the finite-time slack without establishing the limiting constant numerically.}
\label{fig:ftsim}
\end{figure}

Figure~\ref{fig:ftsim} instantiates Theorem~\ref{thm:finite} on a second instance with a genuinely $3$-symbol alphabet ($L=3$), confirming Lemma~\ref{lem:dual}'s dual form (matching a direct primal solve to $1.4\times10^{-16}$) and the constants $c_2=\ln2$, $c_3=\ln3$ (proved after Lemma~\ref{lem:mixture} and checked numerically through $n=400$).

\begin{figure}[t]
\centering
\includegraphics[width=\columnwidth]{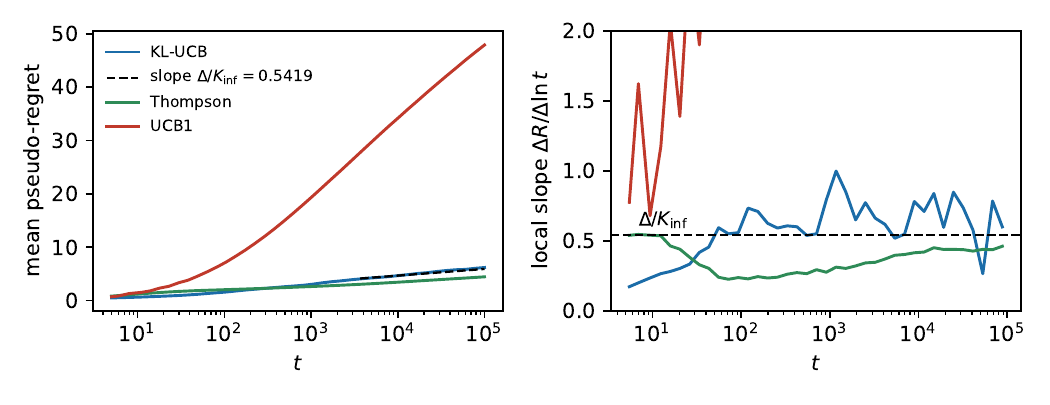}
\caption{Confirmatory regret simulation at the running instance ($5000$ independent runs, $T=10^5$; shading: $\pm2$ standard errors). Left: mean pseudo-regret for Bernoulli KL-UCB with the standard threshold $\ln t$ \cite{GarivierCappe11} on the spectral outcomes (the analyzed threshold of Section~\ref{sec:finite} adds $O(\ln\ln t)$ terms; both are compared in Fig.~\ref{fig:ftsim}), Thompson sampling \cite{KKM12}, and UCB1 \cite{ACF02}. Right: local slope $\Delta R/\Delta\ln t$ (UCB1 runs off the panel's scale). The tail OLS slope over $t\in[10^4,10^5]$ is $0.656$ (bootstrap $95\%$ CI $[0.639,0.675]$) for KL-UCB, decreasing toward the predicted $\Delta/\Kinf=0.5419$ (dashed); Thompson approaches the same constant from below ($0.437$); UCB1's measured tail slope is $5.97$ --- $9.1\times$ KL-UCB's, and $11\times$ the analytic optimum --- once Theorem~\ref{thm:collapse} fixes the measurement, the \emph{index} is what remains to get right.}
\label{fig:regret}
\end{figure}

\begin{figure}[t]
\centering
\includegraphics[width=\columnwidth]{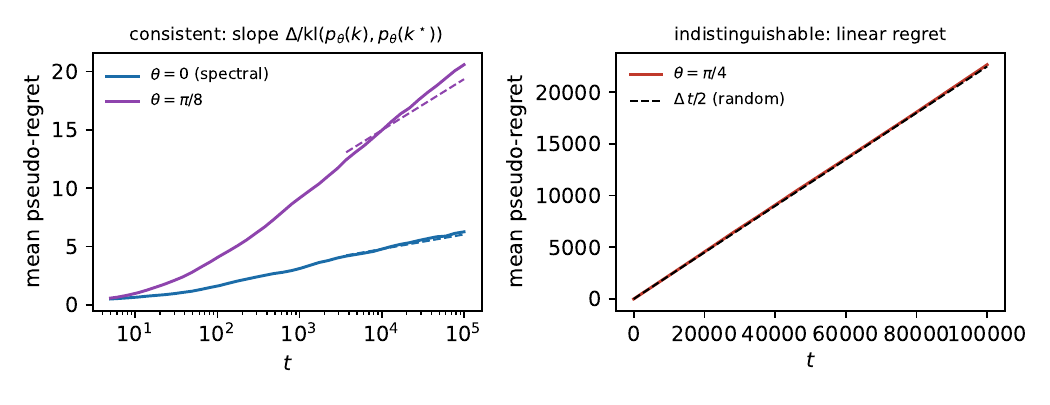}
\caption{Theorem~\ref{thm:collapse} operationally: fixing a projective measurement at angle $\theta$ from $F$'s eigenbasis induces a classical Bernoulli bandit with means $p_\theta(k^\star)=(1+0.9\cos\theta)/2$ and $p_\theta(k)=(1+0.9\sin\theta)/2$ ($1200$ runs each, same index). Left: at $\theta=0$ the KL-UCB constant is the quantum-optimal $\Delta/\Kinf=0.5419$; at $\theta=\pi/8$ it degrades to $\Delta/\kl(p_\theta(k),p_\theta(k^\star))=1.895$ (dashed asymptotes; measured tail slopes $0.63$ and $2.44$, each decreasing toward its constant). Right: at $\theta=\pi/4$ the induced means coincide ($(1+0.9/\sqrt2)/2$ for both arms), the arms are indistinguishable, and regret is linear, matching $\Delta t/2$ to $0.8\%$. Consistency over the full quantum instance class is stricter still for tilted measurements: the half-space constants $C^*_{M(\theta)}$ fall from $\Kinf=0.830$ at $\theta=0$ to $0.076$ at $\pi/8$ and to numerical zero at $\pi/4$ --- the mimicability noted above.}
\label{fig:meas}
\end{figure}

Figures~\ref{fig:regret} and~\ref{fig:meas} close the loop operationally. The first provides finite-horizon evidence consistent with Theorem~\ref{thm:match}; the second makes the collapse tangible by running the \emph{same} index on the outcomes of deliberately rotated measurements: the regret constant degrades continuously with the measurement angle and diverges at the mimicability point, while $\theta=0$ --- the spectral choice --- attains the optimum that Theorem~\ref{thm:lbsc} shows no consistent single-copy strategy can beat.

\section{Discussion}\label{sec:disc}

The message of Theorems~\ref{thm:collapse}--\ref{thm:match} is sharply practical: in single-copy sequential learning of unknown states with an observable-defined objective, measurement design is worth nothing asymptotically --- measure the objective, run a classical index, with the finite-time guarantee of Theorem~\ref{thm:finite}. Any improvement in the leading asymptotic regret coefficient requires collective processing, is capped by Corollary~\ref{cor:cap}, with the cap equal to one exactly when the arm commutes with $F$ (Corollary~\ref{cor:iff}); storing copies for later single-copy measurement does not help (Remark~\ref{rem:delay}); finite-time improvements are not excluded. A collective improvement has not been observed in the tested two-copy bases (Section~\ref{sec:n2}), and its ultimate attainability is a well-posed open question at the boundary of composite quantum hypothesis testing (Open Problem~\ref{op:collective}). Beyond linear reward functionals --- our motivation arises from learning-driven discovery loops over density-matrix models, where rewards are likelihoods --- a local linearization makes $F$ the gradient observable of the reward functional at the current model; formalizing this reduction is left to future work.

\section*{Acknowledgment}
The author thanks K.-M.~Chung (Academia Sinica) for helpful discussions.


\begin{thebibliography}{35}
\bibitem{BLT24} S.~Brahmachari, J.~Lumbreras, and M.~Tomamichel, ``Quantum contextual bandits and recommender systems for quantum data,'' \emph{Quantum Mach. Intell.}, vol.~6, art.~58, 2024.
\bibitem{SPJ26} J.~P.~Simpson, E.~Palias, and S.~T.~Jose, ``Optimal error exponents for composite sequential quantum hypothesis testing,'' arXiv:2605.04915v3, 2026.
\bibitem{MSW22} M.~Mosonyi, Zs.~Szil\'agyi, and M.~Weiner, ``On the error exponents of binary state discrimination with composite hypotheses,'' \emph{IEEE Trans. Inf. Theory}, vol.~68, no.~2, pp.~1032--1067, 2022.
\bibitem{LHT22} J.~Lumbreras, E.~Haapasalo, and M.~Tomamichel, ``Multi-armed quantum bandits: Exploration versus exploitation when learning properties of quantum states,'' \emph{Quantum}, vol.~6, p.~749, 2022.
\bibitem{LumPure24} J.~Lumbreras, M.~Terekhov, and M.~Tomamichel, ``Learning pure quantum states (almost) without regret,'' arXiv:2406.18370, 2024.
\bibitem{LumAnyDim26} J.~Lumbreras and M.~Tomamichel, ``Learning pure quantum states in any dimension (almost) without regret,'' arXiv:2605.09019, 2026.
\bibitem{LumThesis25} J.~Lumbreras, ``Bandits roaming Hilbert space,'' Ph.D.\ thesis, arXiv:2509.24569, 2025.
\bibitem{Wan23} Z.~Wan, Z.~Zhang, T.~Li, J.~Zhang, and X.~Sun, ``Quantum multi-armed bandits and stochastic linear bandits enjoy logarithmic regrets,'' in \emph{Proc. AAAI}, vol.~37, pp.~10087--10094, 2023.
\bibitem{LTT22} Y.~Li, V.~Y.~F. Tan, and M.~Tomamichel, ``Optimal adaptive strategies for sequential quantum hypothesis testing,'' \emph{Commun. Math. Phys.}, vol.~392, no.~3, pp.~993--1027, 2022.
\bibitem{BK96} A.~N. Burnetas and M.~N. Katehakis, ``Optimal adaptive policies for sequential allocation problems,'' \emph{Adv. Appl. Math.}, vol.~17, no.~2, pp.~122--142, 1996.
\bibitem{GarivierCappe11} A.~Garivier and O.~Capp\'e, ``The KL-UCB algorithm for bounded stochastic bandits and beyond,'' in \emph{Proc. COLT}, PMLR vol.~19, pp.~359--376, 2011.
\bibitem{MMS11} O.-A.~Maillard, R.~Munos, and G.~Stoltz, ``A finite-time analysis of multi-armed bandits problems with Kullback--Leibler divergences,'' in \emph{Proc. COLT}, PMLR vol.~19, pp.~497--514, 2011.
\bibitem{CGMMS13} O.~Capp\'e, A.~Garivier, O.-A.~Maillard, R.~Munos, and G.~Stoltz, ``Kullback--Leibler upper confidence bounds for optimal sequential allocation,'' \emph{Ann. Statist.}, vol.~41, no.~3, pp.~1516--1541, 2013.
\bibitem{ACF02} P.~Auer, N.~Cesa-Bianchi, and P.~Fischer, ``Finite-time analysis of the multiarmed bandit problem,'' \emph{Mach. Learn.}, vol.~47, no.~2--3, pp.~235--256, 2002.
\bibitem{KKM12} E.~Kaufmann, N.~Korda, and R.~Munos, ``Thompson sampling: An asymptotically optimal finite-time analysis,'' in \emph{Proc. Algorithmic Learning Theory (ALT)}, LNCS vol.~7568, pp.~199--213, Springer, 2012.
\bibitem{HondaTakemura15} J.~Honda and A.~Takemura, ``Non-asymptotic analysis of a new bandit algorithm for semi-bounded rewards,'' \emph{J. Mach. Learn. Res.}, vol.~16, pp.~3721--3756, 2015.
\bibitem{GMS19} A.~Garivier, P.~M\'enard, and G.~Stoltz, ``Explore first, exploit next: The true shape of regret in bandit problems,'' \emph{Math. Oper. Res.}, vol.~44, no.~2, pp.~377--399, 2019.
\bibitem{Umegaki62} H.~Umegaki, ``Conditional expectation in an operator algebra, IV (Entropy and information),'' \emph{K\=odai Math. Sem. Rep.}, vol.~14, pp.~59--85, 1962.
\bibitem{Lindblad75} G.~Lindblad, ``Completely positive maps and entropy inequalities,'' \emph{Commun. Math. Phys.}, vol.~40, pp.~147--151, 1975.
\bibitem{Uhlmann77} A.~Uhlmann, ``Relative entropy and the Wigner--Yanase--Dyson--Lieb concavity in an interpolation theory,'' \emph{Commun. Math. Phys.}, vol.~54, pp.~21--32, 1977.
\bibitem{HiaiPetz91} F.~Hiai and D.~Petz, ``The proper formula for relative entropy and its asymptotics in quantum probability,'' \emph{Commun. Math. Phys.}, vol.~143, pp.~99--114, 1991.
\bibitem{OgawaNagaoka00} T.~Ogawa and H.~Nagaoka, ``Strong converse and Stein's lemma in quantum hypothesis testing,'' \emph{IEEE Trans. Inf. Theory}, vol.~46, no.~7, pp.~2428--2433, 2000.
\bibitem{BFT17} M.~Berta, O.~Fawzi, and M.~Tomamichel, ``On variational expressions for quantum relative entropies,'' \emph{Lett. Math. Phys.}, vol.~107, no.~12, pp.~2239--2265, 2017.
\bibitem{JRSWW18} M.~Junge, R.~Renner, D.~Sutter, M.~M.~Wilde, and A.~Winter, ``Universal recovery maps and approximate sufficiency of quantum relative entropy,'' \emph{Ann. Henri Poincar\'e}, vol.~19, no.~10, pp.~2955--2978, 2018.
\bibitem{Petz86} D.~Petz, ``Sufficient subalgebras and the relative entropy of states of a von Neumann algebra,'' \emph{Commun. Math. Phys.}, vol.~105, pp.~123--131, 1986.
\bibitem{BBH21} M.~Berta, F.~G. S. L. Brand\~ao, and C.~Hirche, ``On composite quantum hypothesis testing,'' \emph{Commun. Math. Phys.}, vol.~385, pp.~55--77, 2021.
\bibitem{LamiSanov25} L.~Lami, ``Generalised quantum Sanov theorem revisited,'' arXiv:2510.06340v1, 2025.
\bibitem{Lami25} L.~Lami, ``A solution of the generalised quantum Stein's lemma,'' \emph{IEEE Trans. Inf. Theory}, vol.~71, no.~6, pp.~4454--4484, 2025.
\bibitem{HY25} M.~Hayashi and H.~Yamasaki, ``The generalized quantum Stein's lemma and the second law of quantum resource theories,'' \emph{Nat. Phys.}, vol.~21, pp.~1988--1993, 2025.
\bibitem{BHKW20} M.~M.~Wilde, M.~Berta, C.~Hirche, and E.~Kaur, ``Amortized channel divergence for asymptotic quantum channel discrimination,'' \emph{Lett. Math. Phys.}, vol.~110, no.~8, pp.~2277--2336, 2020.
\bibitem{Tangled24} M.~Berta, F.~G. S.~L.~Brand\~ao, G.~Gour, L.~Lami, M.~B. Plenio, B.~Regula, and M.~Tomamichel, ``The tangled state of quantum hypothesis testing,'' \emph{Nat. Phys.}, vol.~20, pp.~172--175, 2024.
\bibitem{KrichevskyTrofimov81} R.~E. Krichevsky and V.~K. Trofimov, ``The performance of universal encoding,'' \emph{IEEE Trans. Inf. Theory}, vol.~27, no.~2, pp.~199--207, 1981.
\bibitem{XieBarron00} Q.~Xie and A.~R. Barron, ``Asymptotic minimax regret for data compression, gambling, and prediction,'' \emph{IEEE Trans. Inf. Theory}, vol.~46, no.~2, pp.~431--445, 2000.
\bibitem{Ville39} J.~Ville, \emph{\'Etude critique de la notion de collectif}. Gauthier-Villars, Paris, 1939.
\bibitem{WOSVW03} T.~Weissman, E.~Ordentlich, G.~Seroussi, S.~Verd\'u, and M.~J. Weinberger, ``Inequalities for the $L_1$ deviation of the empirical distribution,'' Hewlett-Packard Labs, Tech.\ Rep.\ HPL-2003-97, 2003.
\end{thebibliography}
\end{document}